\newcommand{\triangleq}{\mathrel{\overset{\triangle}{=}}}
\pgfplotsset{compat=1.18}
\definecolor{lightgreen}{rgb}{.9,1,.9}
\definecolor{red}{rgb}{1,0,0}
\newcolumntype{L}[1]{>{\raggedright\arraybackslash}p{#1}}
\newcolumntype{C}[1]{>{\centering\arraybackslash}p{#1}}
\newcolumntype{R}[1]{>{\raggedleft\arraybackslash}p{#1}}
\newtheorem{theorem}{Theorem}
\definecolor{lightgreen}{rgb}{.9,1,.9}
\definecolor{red}{rgb}{1,0,0}
\def\lim{\mathop{\mathsf{lim}}} 
\def\min{\mathop{\mathsf{min}}}
\def\log{\mathrm{log}}
\newcommand{\by}{\mathbf{y}}
\newcommand{\bx}{\mathbf{x}}
\newcommand{\bz}{\mathbf{z}}
\newcommand{\bh}{\mathbf{h}}
\newcommand{\bn}{\mathbf{n}}
\newcommand{\bH}{\mathbf{H}}
\newcommand{\bI}{\mathbf{I}}
\def\R{\mathbb{R}}
\newcommand{\dd}{\mathrm{d}}
\newif\ifblackandwhitecycle
\gdef\patternnumber{0}
        \gdef\patternnumber{1}
        \gdef\patternnumber{1}
            \gdef\patternnumber{0}
        \pgfgetlastxy{\imagewidth}{\imageheight}
        \global\let\imagewidth=\imagewidth
        \global\let\imageheight=\imageheight
        \gdef\columncount{1}
        \gdef\rowcount{1}
\newcommand\zoombox[2][]{
    \begin{scope}[zoombox paths]
        \pgfmathsetmacro\xpos{
            0 
        }
        \pgfmathsetmacro\ypos{
            0 
        }
        \edef\dospy{\noexpand\spy [
            #1,
            zoombox paths/.append style={
                black and white pattern=\patternnumber
            },
            every spy on node/.append style={#1},
            x=\imagewidth,
            y=\imageheight
        ] on (#2) in node [anchor=south west] at ($(image.south west)+(\xpos pt,\ypos pt)+(1 pt,1 pt)$);}
        \dospy
        \pgfmathtruncatemacro\pgfmathresult{ifthenelse(\columncount==\pgfkeysvalueof{/tikz/zoomboxarray columns},\rowcount+1,\rowcount)}
        \global\let\rowcount=\pgfmathresult
        \pgfmathtruncatemacro\pgfmathresult{ifthenelse(\columncount==\pgfkeysvalueof{/tikz/zoomboxarray columns},1,\columncount+1)}
        \global\let\columncount=\pgfmathresult
        \ifblackandwhitecycle
            \pgfmathtruncatemacro{\newpatternnumber}{\patternnumber+1}
            \global\edef\patternnumber{\newpatternnumber}
        \fi
    \end{scope}
}
\newcommand*{\addFileDependency}[1]{
  \typeout{(#1)}
  \@addtofilelist{#1}
  \IfFileExists{#1}{}{\typeout{No file #1.}}
}
\def\mytitle{Bregman geometry-aware split Gibbs sampling \\for Bayesian Poisson inverse problems}
\title{\mytitle
\thanks{{This work was funded in part by the ANR AI.iO Project (ANR-20-THIA-0017), the BACKUP project (ANR-23-CE40-0018-01) and the Artificial Natural Intelligence Toulouse Institute (ANITI, ANR-23-IACL-0002).}}}
\author{
Elhadji~C.~Faye\thanks{Institut Denis Poisson, UMR CNRS University of Orléans, University of Tours, Orléans, France (\texttt{elhadji-cisse.faye@univ-orleans.fr}).}
\and Mame~Diarra~Fall\thanks{Univ Rouen Normandie, INSA Rouen Normandie, Universit\'e Le Havre Normandie,
Normandie Univ, LITIS UR 4108, F-76000 Rouen, France
  (\texttt{diarra.fall@univ-rouen.fr}).}  
\and Nicolas~Dobigeon\thanks{University of Toulouse, IRIT/INP-ENSEEIHT, CNRS, 2 rue Charles Camichel, BP 7122, 31071 Toulouse Cedex 7, France 
  (\texttt{Nicolas.Dobigeon@enseeiht.fr}).}
\and \'Eric~Barat\thanks{CEA, University of Paris-Saclay, France 
  (\texttt{eric.barat@cea.fr}).}  
}
\begin{document}

\maketitle

\begin{abstract}
  This paper proposes a novel Bayesian framework for solving Poisson inverse problems by devising a Monte Carlo sampling algorithm which accounts for the underlying non-Euclidean geometry. To address the challenges posed by the Poisson likelihood -- such as non-Lipschitz gradients and positivity constraints -- we derive a Bayesian model which leverages exact and asymptotically exact data augmentations. In particular, the augmented model incorporates two sets of splitting variables both derived through a Bregman divergence based on the Burg entropy. Interestingly the resulting augmented posterior distribution is characterized by conditional distributions which benefit from natural conjugacy properties and preserve the intrinsic geometry of the latent and splitting variables. This allows for efficient sampling via Gibbs steps, which can be performed explicitly for all conditionals, except the one incorporating the regularization potential. For this latter, we resort to a Hessian Riemannian Langevin Monte Carlo (HRLMC) algorithm which is well suited to handle priors with explicit or easily computable score functions. By operating on a mirror manifold, this Langevin step ensures that the sampling satisfies the positivity constraints and more accurately reflects the underlying problem structure. Performance results obtained on denoising, deblurring, and positron emission tomography (PET) experiments demonstrate that the method achieves competitive performance in terms of reconstruction quality compared to  optimization- and  sampling-based approaches.
\end{abstract}

\section{Introduction}\label{introduction}
Reconstructing images from measurements corrupted by Poisson noise is a fundamental problem in computational imaging, with critical applications in low-light photography, astronomy~\cite{hanisch1994restoration}, emission tomography~\cite{shepp2007maximum}, and fluorescence microscopy~\cite{agard1983three, sarder2006deconvolution}. In such settings, the data acquisition process involves photon-limited imaging, where the discrete and stochastic nature of photon arrivals leads to signal-dependent noise accurately modeled by a Poisson distribution~\cite{janesick2007photon}. Mathematically, the measurements $\by = (y_i)_{1 \le i \le m} \in \mathbb{N}^m$ are modeled as
\begin{equation}\label{eq:obs_intro}
    y_i \sim \mathcal{P}(\alpha \bh_i^\top \bx), \quad 1 \le i \le m,
\end{equation}
where $\bx \in \mathbb{R}_+^n$ denotes the unknown intensity image, $\alpha > 0$ controls the intensity level and reflects the severity of shot noise affecting the measurements, and each $\bh_i \in \mathbb{R}^n$ defines the forward model, collected row-wise in a matrix $\bH \in \mathbb{R}_+^{m \times n}$. From equation \eqref{eq:obs_intro}, the negative log-likelihood  of the model is given by
\begin{align}\label{eq:likelihood}
    - \log\, p(\by | \bx) &= f(\bx; \by)\nonumber \\
    				 &=  \sum_{i=1}^m  \left[ \alpha \bh_i^\top \bx - y_i \log(\alpha \bh_i^\top \bx) +  \log(y_i!)   \right].
\end{align}
Recovering the signal or image of interest $\bx$ from the measurements $\by$ is typically an ill-posed inverse problem due to noise, incomplete data, and ill-conditioning of $\bH$, necessitating suitable regularization strategies.

To address the ill-posedness of Poisson inverse problems, numerous optimization-based methods have been developed. These approaches typically formulate the reconstruction task as a variational problem, combining the data fidelity term $f(\cdot; \by)$ derived from the Poisson likelihood \eqref{eq:likelihood} with a regularization term $g(\cdot)$ that encodes prior knowledge about the image. It is important to note that the Poisson log-likelihood does not exhibit a Lipschitz-continuous gradient. This is a sufficient condition for classical algorithms to be applicable with guaranteed convergence. To mitigate these issues, researchers have explored various approaches. One classical method is the Poisson Image Deconvolution by Augmented Lagrangian (PIDAL) \cite{figueiredo2010restoration} algorithm, which employs the Alternating Direction Method of Multipliers (ADMM) \cite{afonso2010augmented} framework with total variation (TV) regularization \cite{rudin1992nonlinear}. Alternatively, Bauschke \emph{et al.} addressed this problem by proposing a proximal gradient descent (PGD) algorithm within the framework of Bregman divergence, called Bregman proximal gradient (BPG) \cite{bauschke2018regularizing}. The advantage of BPG lies in relaxing the smoothness requirement on the negative log-likelihood needed for PGD convergence and instead introduces the \emph{NoLip} condition. Whatever their algorithmic structures, conventional instances of these optimization approaches often employ explicit handcrafted priors, such as total variation (TV)~\cite{rudin1992nonlinear} or sparsity-inducing priors~\cite{lee2010hierarchical, park2008bayesian, dobigeon2009hierarchical}, which may not capture the complex structures present in natural images. To overcome these limitations, alternative approaches involving implicit data-driven priors have been introduced. Among these, Plug-and-Play (PnP)~\cite{venkatakrishnan2013plug} and Regularization-by-Denoising (RED)~\cite{romano2017little} methods have gained significant success. These approaches modify traditional proximal splitting schemes by replacing proximal or gradient steps with the application of powerful denoisers, such as BM3D \cite{dabov2007image}, demonstrating improved performance in Poisson reconstruction tasks \cite{marais2017proximal}. Such strategies have gained in popularity after the recent advances in the field of deep learning, leveraging the expressive power of neural network-based denoisers \cite{milanfar2024denoising}. In this direction, Hurault \emph{et al.}  propose a PnP algorithm specifically tailored for Poisson inverse problems, which ensures convergence by embedding a denoising operator within a Bregman geometry \cite{hurault2023convergent}. The method addresses the incompatibility between standard PnP schemes, which generally assume Gaussian noise and rely on Euclidean proximity operators, and the structure of Poisson noise, whose negative log-likelihood is neither Lipschitzian nor smooth. 
While these optimization-based methods have shown success in various scenarios, they primarily provide point estimates of the reconstructed image and do not inherently quantify the uncertainty associated with the reconstruction.
To overcome this limitation, Bayesian methods offer a probabilistic alternative, where the solution $\bx$ is treated as a random variable equipped with a prior distribution $p(\bx) \propto \exp\{ - g(\bx) \} $, and inference is conducted from the posterior distribution
\begin{align}\label{eq:posterior}
   \pi(\bx) \triangleq p(\bx | \by) &\propto p(\by | \bx) \, p(\bx) \nonumber \\
        &\propto \exp\{ - f(\bx; \by) - g(\bx) \}.
\end{align}
As with the aforementioned optimization-based methods, Bayesian inference for Poisson inverse problems faces challenges due to the non-smoothness, non-convexity, and constraints such as non-negativity associated with the data fitting term $f(\cdot;\by)$. To address these challenges, the authors of \cite{Vono2019icassp} propose a split Gibbs sampler (SGS) that samples from an asymptotically exact approximation of the target distribution \cite{vono2020asymptotically}. By introducing auxiliary variables through a splitting strategy, SGS decomposes the original sampling problem into simpler subproblems that can be more easily managed, even in high-dimensional settings. More recently, Melidonis \emph{et al.} propose a reflected and regularized Langevin stochastic differential equation (SDE) that is well-posed and exponentially ergodic under mild conditions \cite{melidonis2023efficient}. This framework leads to the development of four reflected proximal Langevin Markov chain Monte Carlo (MCMC) algorithms, demonstrating effectiveness in image deblurring, denoising, and inpainting tasks under various noise models, including Poisson noise. Building upon this, Klatzer {\emph et al.} develop a novel PnP Langevin sampling methodology tailored for low-photon Poisson imaging problems \cite{klatzer2025efficient}. This approach incorporates two strategies: \emph{i)} a PnP Langevin method with reflections and a Poisson likelihood approximation, and \emph{ii)} a mirror sampling algorithm utilizing Riemannian geometry to handle constraints and the irregularity of the likelihood without approximations.

This work introduces a novel Bayesian model for addressing Poisson inverse problems that can be instantiated for a wide range of regularization potentials $g(\cdot)$, in particular those associated with implicit, data-driven priors. This model builds on several key ingredients. First, it leverages an exact data augmentation strategy, a modeling trick that has proven effective in previous works on Poisson inversion, to reformulate the likelihood in a tractable form while preserving the physical interpretability of the model. Second, it follows a geometry-aware double augmentation scheme that not only decouples the inference into simpler subproblems but also maintains favorable conjugacy properties for exact sampling. Third,  a split Gibbs sampler (SGS) is implemented to handle the resulting augmented posterior efficiently. Three of the four steps of this SGS boil down to sampling according to standard conditional distributions, which can be achieved straightforwardly. The remaining step consists in sampling according to the conditional distribution which embeds the regularization potential $g(\cdot)$ whose specification remains at the discretion of the end-user. For the sake of generality, we therefore propose to resort to a versatile yet powerful sampling step that can handle, in practice, a broad class of prior distributions -- provided their score function $\nabla \log\, g(\mathbf{x})$ is either explicit or readily computable. More precisely, this sampling step is based on the Hessian Riemannian Langevin Monte Carlo (HRLMC) algorithm \cite{zhang2020wasserstein}, a mirror Langevin algorithm which explicitly accounts for the underlying geometry. Once combined, these components form a coherent Bayesian framework for geometry-aware Monte Carlo sampling in Poisson imaging. The proposed so-called HRLMC-within-SGS  (HRLwSGS) algorithm allows for sampling from posterior distributions granted with data-driven prior models such as those based on denoising operators. The remainder of this paper is structured as follows. Section~\ref{background} reviews technical preliminaries necessary to the derivation of the proposed Bayesian framework subsequently exposed in Section~\ref{proposed_framework}. Section~\ref{sec:proposed_algorithm} describes the Monte Carlo algorithm proposed to sample from the target posterior distribution. Section~\ref{sec:experiments} reports some numerical results illustrating the efficiency of the method when solving various Poisson inversion tasks. Section~\ref{sec:conclusion} summarizes the main findings and concludes the paper.


\section{Background}\label{background}
This section reviews the two key concepts underlying the methodology derived in the sequel of this paper: Bregman divergences and the asymptotically exact data augmentation (AXDA) framework. These elements form the foundation of the Bayesian model derived to solve Poisson inverse problems efficiently.

\subsection{Bregman divergence}
Given a strictly convex and differentiable function $h: \mathbb{R}^n \to \mathbb{R}$, the Bregman divergence  between two points $\bx$ and $\bz$ is defined as \cite{bregman1967relaxation}
\begin{equation}\label{eq:bregman_divergence}
    d_h(\bx, \bz) = h(\bx) - h(\bz) - \langle \nabla  h(\bz), \bx - \bz \rangle.
\end{equation}
In the context of inverse problems, Bregman divergences were introduced in \cite{eggermont1993maximum, burger2004convergence} and are often used to encode constraints or non-Euclidean geometries. Unlike conventional metrics, Bregman divergences are generally asymmetric and do not necessarily satisfy the triangle inequality. Nevertheless, they capture important geometric properties and enable projections that are adapted to the underlying structure of the data. Several choices of the convex generator $h(\cdot)$ yield Bregman divergences adapted to different problem geometries. For instance, when $h(\bx) = \frac{1}{2} \|\bx\|^2$, the resulting Bregman divergence reduces to the squared Euclidean distance
\begin{equation}\label{eq:squared_Euclidean_distance}
    d_{\mathrm{E}}(\bx, \bz) = \frac{1}{2} \| \bx - \bz \|^2,
\end{equation}
which corresponds to the standard geometry underlying numerous problems underlying Gaussian noise models. Alternatively, choosing  $h(\bx) = \sum_{i=1}^n \bx_i \log \bx_i - \bx_i $ yields the generalized Kullback-Leibler (KL) divergence
\begin{equation}
    d_{\mathrm{KL}}(\bx, \bz) = \sum_{i=1}^n \left( \bx_i \log \frac{\bx_i}{\bz_i} - \bx_i + \bz_i \right),
\end{equation}
which is particularly suitable for modeling non-negative variables such as intensities and for tackling inverse problems under Poisson noise. 
Another relevant choice is the Burg entropy \cite{burg1975maximum} defined on $\mathbb{R}_{++}^n$ by 
\begin{equation}\label{eq:burg}
  h(\bx) = - \sum_{i=1}^n \log \, \bx_i.  
\end{equation}
The associated Bregman divergence, also known as the Itakura-Saito divergence, is given by
\begin{equation}\label{eq:itakura_saito}
    d_{\mathrm{IS}}(\bx, \bz) = \sum_{i=1}^n \left( \frac{\bx_i}{\bz_i} - \log \frac{\bx_i}{\bz_i} - 1 \right),
\end{equation}
which is known to be particularly appropriate for modeling strictly positive variables and arises naturally in information geometry and Poisson-like models. This divergence respects the multiplicative structure of the positive orthant and enforces positivity implicitly and has been shown to be particularly well suited for modeling multiplicative and Poisson noise processes \cite{banerjee2005clustering, fevotte2009nonnegative, cavalcanti2019factor}.

\subsection{Asymptotically exact data augmentation (AXDA)}\label{subsec:AXDA}

The AXDA framework~\cite{vono2020asymptotically} introduces auxiliary variables to reformulate complex or intractable distributions into more manageable forms. In Bayesian inference, to facilitate efficient sampling from the posterior distribution \eqref{eq:posterior}, AXDA introduces an auxiliary variable $\bz \in \mathbb{R}^n$ such that the joint (augmented) distribution writes
\begin{equation}\label{eq:split_dist}
    \pi_\rho(\bx, \bz) \propto \exp\left\{ - f(\bx;\by) - g(\bz) \right\}\, \kappa_\rho(\bz, \bx),
\end{equation}
where $\kappa_\rho(\cdot,\cdot)$ is such that $\pi_{\rho}$ defines a proper joint distribution. Vono \emph{et al.} \cite{vono2020asymptotically} discuss various design choices for the kernel function $\kappa_\rho(\cdot,\cdot)$, which plays a central role in the formulation of the augmented posterior distribution and the subsequent sampling scheme. Among the possible constructions, a general class of admissible geometry-aware kernels can be defined as
\begin{equation}
\kappa_\rho(\bz, \bx) = \exp \left\{ -\frac{1}{\rho} d_h(\bz, \bx) + \varphi(\bz) \right\},
\label{eq:general_kernel}
\end{equation}
where $\rho > 0$ is a coupling parameter, $d_h$ denotes the Bregman divergence \eqref{eq:bregman_divergence} associated with the strictly convex and $\mathcal{C}^2$ function $h \colon \mathbb{R}^n \to \mathbb{R}$ defining the Bregman geometry and $\varphi : \mathbb{R}^n \to \mathbb{R}$ is a normalization potential ensuring integrability of the kernel. A natural and widely used choice is the quadratic generator $h(\bx) = \frac{1}{2} \| \bx \|^2$ which leads to the standard Euclidean distance defined in \eqref{eq:squared_Euclidean_distance} and yields the following Gaussian kernel
\begin{equation}
\kappa_\rho(\bz, \bx)  \propto_{\bz} \exp\left\{ - \frac{1}{2\rho} \| \bz - \bx \|^2 \right\}.
\label{eq:gaussian_kernel}
\end{equation}
This choice has been adopted in many works from the literature dedicated to inverse or regression problems underlying Gaussian noises \cite{vono2018sparse, Coeurdoux2024pnp, wu2024principled, Sun2024provable,Faye2024} or even Poisson noises \cite{Vono2019icassp}. 

As established in \cite{vono2020asymptotically}, the total variation distance between the marginal density $p_\rho(\bx | \by) \triangleq \int \pi_\rho(\bx, \bz) \dd \bz$ and the target distribution $\pi(\bx)$ vanishes as $\rho \to 0$, i.e 
\begin{equation*}
    \left\| \pi(\bx) - p_\rho(\bx) \right\|_{\text{TV}} \xrightarrow[\rho \to 0]{} 0.
\end{equation*}
This result guarantees that the original distribution $\pi$ is asymptotically recovered from the marginal $p_\rho$ in the limit of vanishing coupling parameter $\rho$. The split Gibbs sampler (SGS) alternatively samples according to the two conditional distributions associated with the augmented distribution $\pi_{\rho}$ to generate samples asymptotically distributed according to \eqref{eq:split_dist} \cite{vono2019split}. Interestingly, this splitting allows the two terms $f(\cdot;\by)$ and $g(\cdot)$ defining the full potential to be dissociated and involved into two distinct conditional distributions. Thus, SGS shares strong similarities with ADMM \cite{afonso2010augmented} and HQS  \cite{Geman1995nonlinear} methods since this divide-and-conquer strategy leads to simpler, scalable and more efficient sampling schemes.

\section{Bayesian model}\label{proposed_framework}
This section builds the Bayesian model step-by-step. As already emphasized in the introduction, this model can embed a large variety of prior distributions $p(\mathbf{x}) \propto \exp \left\{-g(\mathbf{x})\right\}$. Thus for the sake of generality and simplicity, this model is derived without specifying the regularization potential $g(\cdot)$, even if particular choices of this potential will be considered later. We first describe the exact data augmentation strategy, which reformulates the Poisson likelihood in a way that enables efficient sampling while preserving physical interpretability. We then introduce a second, geometry-aware augmentation that allows for a favorable decoupling of the variables and leads to conditional distributions with tractable forms. Finally, we present the full augmented posterior distribution that serves as the basis for the proposed sampling scheme.

\subsection{Exact data augmentation} \label{subsec:exact_augmentation}
To mitigate the challenges associated with the non-Lipschitz property of the Poisson likelihood, we propose leveraging a data augmentation strategy  initially adopted in the seminal and popular works introducing the maximum likelihood - expectation maximization (ML-EM) algorithms for emission and transmission tomography \cite{vardi1982maximum, lange1984reconstruction}. Such a smart augmentation scheme in the data space has received some attention in later statistical research related to tomography \cite{filipovic2018pet,goncharov2023nonparametric}, while authors in \cite{Fall2019_IJB, Fall2013_ICIP, Fall2011_ICIP, sitek2010reconstruction} consider an alternative augmentation in the image space.

The key idea is to reformulate the original Poisson likelihood using unobserved latent variables, yielding a tractable and physically interpretable reformulation of the inverse problem. This strategy consists in introducing latent variables $\bn = \{n_{ij}\}_{i,j} \in \mathbb{N}^{m \times n}$ such that
\begin{equation}\label{eq:n_count}
	n_{ij} |  x_j  \sim \mathcal{P}(\alpha h_{ij} x_j)
\end{equation}
where $n_{ij}$ are mutually independent for all $(i,j)$. The conditional distribution of the latent variables given the unknown image becomes factorized and tractable as
\begin{align*}
	p(\bn | \bx) &= \prod_{i=1}^m \prod_{j=1}^n \frac{(\alpha h_{ij} x_j)^{n_{ij}} e^{-\alpha h_{ij} x_j}}{n_{ij}!} \\
	&= \exp\left\{ \sum_{i=1}^m \sum_{j=1}^n \left[ n_{ij} \log(\alpha h_{ij} x_j) - \alpha h_{ij} x_j - \log(n_{ij}!) \right] \right\}.
\end{align*}
In this formulation, the observed measurements $\by \in \mathbb{N}^m$ are deterministically related to the latent variables $\bn$ via the  coherence condition
\begin{equation}\label{eq:coherence_condition}
   \sum_{j=1}^n n_{ij} = y_i, \quad \text{for all } i = 1, \ldots, m.
\end{equation} 
This constraint ensures consistency between the latent process and the measured data. Accordingly, the conditional distribution of $\by$ given $\bn$ can be expressed as
\begin{equation}
    -\log \, p(\by\mid\bn)
    = \iota_{C_\by}(\bn).
\end{equation}
where $\iota_{C_\by}$ denotes the indicator function of the feasible set $C_\by$
\begin{equation*}
    \iota_{C_\by}(\bn) =
    \begin{cases}
        0, & \text{if } \bn \in C_\by,\\[4pt]
        +\infty, & \text{else.}
    \end{cases}
\end{equation*}
and  $C_\by = 
    \Big\{
        \mathbf{n} \in \mathbb{N}^{m\times n} :
        \sum_{j=1}^n n_{ij} = y_i, \ \forall i
    \Big\}.
$ 
Importantly, this augmentation is exact in the sense that marginalizing out the latent variables recovers the original likelihood, as stated in the following theorem.

\begin{theorem}\label{theorem1}
	 The joint likelihood defined as
	\begin{align}\label{eq:augmented_likelihood}
	 	p(\mathbf{y} , \mathbf{n} | \mathbf{x}) &= p(\mathbf{y} | \mathbf{n}) \, p(\mathbf{n} | \mathbf{x})  \propto \exp \left\{ - f(\bx, \bn; \by) \right\}
	\end{align}
	with
	\begin{equation*}
		f(\bx, \bn; \by) = \sum_{i=1}^m \sum_{j=1}^n \left[ - n_{ij} \log(\alpha h_{ij} x_j) + \alpha h_{ij} x_j + \log(n_{ij}!) \right] + \iota_{C_\by}(\bn)
	\end{equation*}
	satisfies
	\begin{equation}\label{eq:exact_marginalization}
		\sum_{\bn \in \mathbb{N}^{m \times n}} p(\by |  \bn) \, p(\bn |  \bx) = p(\by | \bx)
	\end{equation}
    where $p(\by | \bx)$ is the Poisson likelihood defined in \eqref{eq:likelihood}.
\end{theorem}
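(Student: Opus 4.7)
The plan is to establish the marginalization identity \eqref{eq:exact_marginalization} by invoking the classical \emph{superposition property} of independent Poisson random variables, which is itself a direct consequence of the multinomial theorem. Because the latent counts $n_{ij}$ are mutually independent given $\bx$ under \eqref{eq:n_count}, and because $p(\by\mid\bn) = \prod_{i=1}^m \mathbb{1}\{\sum_j n_{ij} = y_i\}$ factorizes row-wise, the multi-index sum over $\bn \in \mathbb{N}^{m\times n}$ decomposes into a product of independent row-wise summations, one for each observation index $i$. This decoupling is the conceptual heart of the argument and reduces the problem to a single scalar identity.

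I would then fix an index $i$ and analyze the inner sum over all tuples $(n_{i1},\ldots,n_{in}) \in \mathbb{N}^n$ constrained by $\sum_j n_{ij} = y_i$. Setting $\lambda_{ij} \defn \alpha h_{ij} x_j$ and factoring out the common exponential $\exp\{-\sum_j \lambda_{ij}\} = \exp\{-\alpha\bh_i^\top\bx\}$, the remaining combinatorial sum takes the form $\sum_{\sum_j n_{ij} = y_i} \prod_{j=1}^n \lambda_{ij}^{n_{ij}}/n_{ij}!$. Multiplying and dividing by $y_i!$ and recognizing the multinomial coefficient $y_i!/\prod_j n_{ij}!$, the multinomial theorem immediately yields
\begin{equation*}
\sum_{\sum_j n_{ij} = y_i} \frac{y_i!}{\prod_j n_{ij}!}\prod_{j=1}^n \lambda_{ij}^{n_{ij}} = \Big(\sum_{j=1}^n \lambda_{ij}\Big)^{y_i} = (\alpha\bh_i^\top\bx)^{y_i}.
\end{equation*}
Thus the inner sum collapses to the Poisson PMF $(\alpha\bh_i^\top\bx)^{y_i} e^{-\alpha\bh_i^\top\bx}/y_i!$ evaluated at $y_i$.

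Finally, taking the product of these row-wise marginals over $i=1,\ldots,m$ reproduces exactly the Poisson likelihood $p(\by\mid\bx)$ from \eqref{eq:likelihood}, establishing \eqref{eq:exact_marginalization}. No serious obstacle is expected: the proof is essentially bookkeeping plus a single invocation of the multinomial theorem. The only point requiring care is to verify that the indicator $\iota_{C_\by}$ appearing in $f(\bx,\bn;\by)$ in \eqref{eq:augmented_likelihood} correctly enforces the coherence condition \eqref{eq:coherence_condition} in a way that is equivalent to (and not more restrictive than) the constraint used to index the combinatorial sum, so that no configuration of latent counts consistent with $\by$ is inadvertently excluded.
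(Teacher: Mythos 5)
Your argument is correct and follows essentially the same route as the paper's proof: restrict the sum via the indicator to the feasible set, factor out the common exponential, and collapse the row-wise constrained sum to $(\alpha\bh_i^\top\bx)^{y_i}/y_i!$ via the multinomial theorem. No gaps.
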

\begin{proof}
    See Appendix \ref{app:proof}.
\end{proof}

The introduction of these latent variables transforms the Poisson likelihood \eqref{eq:likelihood} into an augmented likelihood \eqref{eq:augmented_likelihood} that will be shown to be suitable to splitting and Monte Carlo sampling strategies. Precisely, assuming a prior distribution defined in \eqref{eq:posterior}, the joint posterior over $(\bx, \bn)$ is given by
\begin{align}\label{eq:posterior_with_exact_dt}
	p(\mathbf{x}, \mathbf{n} | \mathbf{y}) &\propto p(\mathbf{y} | \mathbf{n}) \, p(\mathbf{n} | \mathbf{x}) \, p(\mathbf{x}) \nonumber \\
	&\propto \exp \left\{ - f(\bx, \bn; \by) - g(\bx) \right\}.
\end{align}
Theorem \ref{theorem1} ensures that the marginal distribution of $\bx$ under the augmented model coincides exactly with the true posterior. This guarantees that inference performed in the augmented space using $(\bx, \bn)$ targets the correct Bayesian posterior over $\bx$, while enabling tractable sampling steps.

\subsection{Double variable splitting for AXDA}

To further separate the contributions of the prior and the likelihood in the augmented posterior distribution \eqref{eq:posterior_with_exact_dt}, we adopt the AXDA framework recalled in Section~\ref{subsec:AXDA}. This methodology involves introducing an auxiliary variable $\bz_1 \in \mathbb{R}_{++}^n$ in the target posterior distribution defined in Equation~\eqref{eq:posterior_with_exact_dt}. This leads to the following augmented posterior distribution denoted by $\pi_{\rho}$:
\begin{equation}\label{eq:augmented_posterior}
		\begin{split}
		\pi_{\rho}(\bx, \bn, \bz_{1}) &\propto \exp \left\{ - f(\bx, \bn; \by) - g(\bz_1) \right\} \, \kappa_\rho(\bz_1, \bx).
	\end{split}
\end{equation}
A central modeling component in the augmented formulation~\eqref{eq:augmented_posterior} is the choice of the kernel function $\kappa_\rho$, which governs the coupling between the auxiliary variable $\bz_1$ and the variable of interest $\bx$. As already pointed out in Section~\ref{subsec:AXDA}, a popular and widely-adopted choice consists in using a Gaussian kernel \eqref{eq:gaussian_kernel}. While this kernel leads to simple updates when coupled with a Gaussian likelihood, it remains ill-suited to inverse problems involving Poisson noise which are considered in this work. First, in the absence of conjugacy with respect to the Poisson likelihood, it does not result in particularly simple Gibbs steps. Moreover, a Gaussian choice would disregard the positivity constraints inherent in intensity variables and would fail to reflect the geometry of Poisson counts. To overcome these limitations, this work proposes to define the kernel $\kappa_\rho$ using a Bregman divergence that is specifically adapted to the statistical and geometric structure of Poisson inverse problems. It is worth noting that, according to the authors' knowledge, this is the first time that a non-Gaussian kernel $\kappa_\rho$ is adopted within an AXDA framework to solve real-world estimation problems beyond the toy examples considered in \cite{vono2020asymptotically}. More precisely, a particularly relevant choice for the convex generator $h$ in the presence of positivity constraints is the Burg entropy, defined in \eqref{eq:burg} \cite{burg1975maximum}. The corresponding coupling kernel has the following structure
\begin{equation}\label{eq:IS_kernel}
\kappa_\rho(\bz_1, \bx) = \exp \left\{ -\frac{1}{\rho} d_\mathrm{IS}(\bz_1, \bx)  + \varphi(\bz_1)  \right\}
\end{equation}
where $d_\mathrm{IS}(\cdot, \cdot)$ is the Itakura-Saito divergence defined in \eqref{eq:itakura_saito} and $\varphi(\bz_1)$ is the normalizing potential
\begin{equation}
	 \varphi(\bz_1)  = - \sum_{j=1}^{n} \log \, z_{1j}.
\end{equation}
This construction introduces a non-Euclidean geometry that preserves strict positivity and reflects the information structure of photon-limited measurements. Examining the Itakura-Saito divergence, as defined in \eqref{eq:itakura_saito}, highlights that the kernel naturally embeds the latent space within the positive orthant, promoting multiplicative consistency between the auxiliary variable $\bz_1$ and the variable of interest $\bx$.

The resulting doubly augmented posterior distribution enriched with the latent variables $\bn$ and the splitting variable $\bz_1$ writes
\begin{equation}\label{eq:augmented_posterior_1}
	\begin{split}
		\pi_{\rho}(\bx, \bn, \bz_{1})& \propto \exp\bigg\{ - f(\bx, \bn; \by) - g(\bz_1)  -\frac{1}{\rho} d_\mathrm{IS}(\bz_1, \bx) + \varphi(\bz_1)  \bigg\}.
	\end{split}
\end{equation}
Despite the structural advantages brought by the introduction of the first auxiliary variable $\bz_1$, which decouples the data-fitting potential $f(\cdot;\by)$ from the regularization potential $g(\cdot)$, this formulation still presents notable limitations. In particular, the conditional distribution of the image variable $\bx$ given $\bz_1$ and $\bn$ does not yield a conjugate structure amenable to efficient sampling by Gibbs steps in high-dimensional settings. The absence of conjugacy implies that this conditional distribution cannot be sampled directly using standard methods, and instead necessitates more elaborate procedures such as rejection sampling, which are often inefficient because computationally demanding. 

Fortunately, restoring tractable conditional updates by Gibbs steps can be achieved by judiciously leveraging the AXDA framework once more. Introducing a second auxiliary variable $\bz_2 \in \mathbb{R}_{++}^n$ leads to a double splitting strategy, whereby this newly introduced splitting variable $\bz_2$ serves as a stochastic mediator between $\bx$ and $\bz_1$. More precisely, we build on the doubly augmented distribution \eqref{eq:augmented_posterior_1} as
\begin{equation}\label{eq:augmented_posterior_22}
	\begin{split}
		\pi_{\rho}&(\bx, \bn, \bz_{1:2}) \propto \exp\bigg\{ - f(\bx, \bn; \by) - g(\bz_1)   - \frac{1}{\rho} d_\mathrm{IS}(\bz_1, \bz_2) + \varphi(\bz_1) \bigg\}  \, \check{\kappa}_\rho(\bz_2,\bx)
	\end{split}
\end{equation}
where the second coupling kernel $\check{\kappa}_\rho$ also derives from an Itakura-Saito divergence, yet evaluated with reflected arguments
\begin{equation}\label{eq:IS_kernelbis}
\check{\kappa}_\rho(\bz_2,\bx) = \exp \left\{ -\frac{1}{\rho} d_\mathrm{IS}(\bx, \bz_2)  + \varphi(\bz_2)  \right\}.
\end{equation}
This finally leads to the following triply augmented posterior distribution 
\begin{equation}\label{eq:augmented_posterior_2}
	\begin{split}
		\pi_{\rho}& (\bx, \bn, \bz_{1:2}) \propto \exp\bigg\{ - f(\bx, \bn; \by) - g(\bz_1)  
         - \frac{1}{\rho} d_\mathrm{IS}(\bz_1, \bz_2) + \varphi(\bz_1) - \frac{1}{\rho} d_\mathrm{IS}(\bx, \bz_2) + \varphi(\bz_2) \bigg\}.
	\end{split}
\end{equation}
Combining this double splitting strategy with the exact augmentation introduced in Section \ref{subsec:exact_augmentation} leads to an augmented posterior distribution \eqref{eq:augmented_posterior_2} which exhibits a tractable conditional structure. As will be demonstrated in the next sections, it yields four conditional distributions, three of which belong to known and tractable families. In contrast, a single splitting would have lead to more complex dependence and conditional distributions. This property enables an efficient Gibbs-like sampling scheme to be used, which samples iteratively from the full conditionals of $(\bx, \bn, \bz_1, \bz_2)$. It is also worth noting that the non-symmetry of the non-Gaussian coupling kernels $\kappa_{\rho}$ and $\check{\kappa}_{\rho}$ questions the place of the splitting variables $\bz_1$ and $\bz_2$. The alternative choices which would have consisted in exchanging the roles played by the splitting variables or the variable of interest would have lead to a more complex nay invalid sampling scheme, as further discussed in Section \ref{sec:discussion}. Finally, by introducing the additional splitting variable $\bz_2$, which interacts with both $\bz_1$ and $\bx$, the Markov chain transitions are expected to exhibit smoother behaviors. This second splitting variable $\bz_2$ acts as a probabilistic buffer that facilitates information exchange between the prior and the likelihood terms, improving the convergence behavior and stability of the sampling process. The details of this sampling procedure are provided in Section \ref{sec:proposed_algorithm}.

\section{Hessian Riemannian Langevin Monte Carlo within split-Gibbs sampler (HRLwSGS)}\label{sec:proposed_algorithm}
To perform posterior inference, we develop a Gibbs sampling algorithm targeting the joint distribution over latent variables $\bx$, $\bn$, $\bz_1$, and $\bz_2$, conditioned on the observed counts $\by$. Our model structure leads to four conditional distributions, three of which admit closed-form classical sampling schemes, while the fourth requires a specialized geometry-aware sampling method. We therefore organize the description of the Gibbs sampler into two parts:
\begin{itemize}
    \item Classical sampling steps: multinomial, gamma, inverse-gamma,
    \item Non-standard sampling step performed via the Hessian Riemannian Langevin Monte Carlo (HRLMC) algorithm.
\end{itemize}

\subsection{Standard sampling steps} \par

The latent counts $\bn = \{n_{ij}\}$ are introduced as latent variables linking the observed data $\by$ to the unknown image $\bx$ via the equations \eqref{eq:n_count} and \eqref{eq:coherence_condition}. The corresponding conditional distribution is given by
\begin{equation}
	\begin{split}
		p(\bn  |  \by, \bx) &\propto \exp \left\{ -f(\bx, \bn; \by) \right\} \\
		&\propto \exp \left\{\sum_{i=1}^m \sum_{j=1}^n\left[n_{ij}\log(\alpha h_{ij}x_j) - \log(n_{ij})\right] + \iota_{C_\by}(\bn) \right\}.
	\end{split}
\end{equation}
Conditioned on $\by$ and $\bx$, the latent variables $\bn$ follow a product of constrained Poisson distributions. By applying the well-known result that Poisson variables conditioned on their sum follow a multinomial distribution, we obtain
\begin{equation} \label{eq:sample_n}
	\bn_{i}  | \bx, y_i \sim \mathrm{Multinomial} \left\{ y_i, \left\{ \frac{h_{ij} x_j}{\sum_{k=1}^n h_{ik} x_k} \right\}_{j=1}^n \right\}, \quad i=1, \dots , m
\end{equation}
where $\bn_i = (n_{i1}, \dots, n_{in})  \in \mathbb{N}^{n} $. Each row $\bn_{i\cdot}$ is thus sampled independently, making this step computationally efficient.

Given the counts $\bn$ and the auxiliary variable $\bz_2$, the posterior for $\bx$ results from a combination of an augmented Poisson likelihood and a prior defined by the Bregman divergence associated with the Burg entropy
\begin{equation*}
	\begin{split}
		p(\bx | \bz_2, \bn)  &  \propto \exp \left\{ - f(\bx, \bn; \by) -\frac{1}{\rho} d_\mathrm{IS}(\bx, \bz_2)  \right\}    \\
		&\propto \exp \left\{\sum_{i=1}^m\sum_{j=1}^n\left[n_{ij}\log(\alpha h_{ij} x_j)- \alpha h_{ij}x_j\right] -\frac{1}{\rho} \sum_{j=1}^n \left( \frac{x_j}{z_{2j}} - \log \, \frac{x_j}{z_{2j}} \right)  \right\}
	\end{split}
\end{equation*}
The conditional distribution factorizes as
\begin{equation} \label{eq:sample_x}
x_j  | n_{\cdot j}, z_{2j} \sim \mathrm{Gamma} \left( \sum_{i=1}^m n_{ij} + \frac{1}{\rho} + 1,\; \alpha \sum_{i=1}^m h_{ij} + \frac{1}{\rho z_{2j}} \right), \quad j = 1, \dots, n.
\end{equation}

The variable $\bz_2$ acts as an intermediary in the prior coupling between $\bx$ and $\bz_1$. Its conditional distribution is given by
\begin{equation*}
	\begin{split}
		 p(\bz_2 | \bz_1, \bx) &  \propto \exp\bigg\{  - \frac{1}{\rho} d_\mathrm{IS}(\bz_1, \bz_2) - \frac{1}{\rho} d_\mathrm{IS}(\bx, \bz_2) + \varphi(\bz_2) \bigg\}   \\ 
		 &\propto \exp \left\{-\frac{1}{\rho} \sum_{j=1}^n \left( \frac{x_{j}}{z_{2j}} - \log \, \frac{1}{z_{2j}} \right) -\frac{1}{\rho} \sum_{j=1}^n \left( \frac{z_{1j}}{z_{2j}} - \log \, \frac{1}{z_{2j}} \right) - \sum_{j=1}^{n} \log\, z_{2j} \right\}
	\end{split}
\end{equation*}
This conditional distribution is also fully explicit and separable
\begin{equation} \label{eq:sample_z2}
z_{2j}  | z_{1j}, x_j \sim \mathrm{InvGamma} \left( \frac{2}{\rho},\; \frac{x_j + z_{1j}}{\rho} \right), \quad j = 1, \dots, n.
\end{equation}
In the hierarchical model, the variable \(\bz_2\) acts as a latent intermediary between \(\bz_1\) and \(\bx\).  This auxiliary role enables efficient block-wise sampling and facilitates posterior exploration by decoupling the variables while still allowing information to propagate through the hierarchy.
Notably, when $\rho \to 0$, the inverse-gamma distribution becomes increasingly concentrated, and its mean converges to:
$\mathbb{E}[z_{2j}] \longrightarrow \frac{x_j + z_{1j}}{2}, \quad \text{as } \rho \to 0$.
In this regime, $\bz_2$ effectively interpolates between $\bx$ and $\bz_1$, behaving like a stochastic average. Consequently, the hierarchical model induces a form of local averaging that stabilizes the sampling process: each update of $\bz_2$ carries combined information from the current states of $\bx$ and $\bz_1$. This behavior is particularly useful for reducing variance in posterior inference and promoting better mixing in Gibbs-type algorithms.

\subsection{Non-standard sampling} \par
The conditional distribution of $\bz_1$ given $\bz_2$ writes
\begin{equation*}
	\begin{split}
		p(\bz_1 | \bz_2) &  \propto \exp\bigg\{ - g(\bz_1) - \frac{1}{\rho} d_\mathrm{IS}(\bz_1, \bz_2) + \varphi(\bz_1). \bigg\} 
	\end{split}
\end{equation*}
It involves the regularization term $g(\cdot)$ associated with the prior distribution defined in \eqref{eq:posterior}.  Its structure can be easily interpreted as the posterior distribution associated with a non-Gaussian denoising problem  with the regularization function $g(\cdot)$. The objective is to reconstruct $\bz_1$ from a noisy observation $\bz_2$, which is assumed to be affected by a multiplicative noise consistent with an inverse-gamma distribution underlying the Itakura-Saito divergence of the data-fitting term. In other words, the discrepancy between the clean and observed variables is measured through the Bregman divergence $d_h(\bz_1, \bz_2)$, induced by the Burg entropy, which reflects the geometry of positive-valued data. Introducing the potential function
\begin{equation}\label{eq:potential}
U(\bz_1) = g(\bz_1) + \sum_{j=1}^n \log z_{1j} + \frac{1}{\rho} \sum_{j=1}^n \left( \frac{z_{1j}}{z_{2j}} - \log z_{1j} \right),
\end{equation}
this conditional distribution can be rewritten as $p(\bz_1 | \bz_2) \propto \exp \left\{ - U(\bz_1) \right\}$. For most regularization potentials $g(\cdot)$, this distribution does not belong to a standard family. To enable the use of a wide class of regularizations, including implicit data-driven prior distributions, we propose sapmling from this distribution using the Hessian Riemannian Langevin Monte Carlo (HRLMC) algorithm \cite{zhang2020wasserstein,li2022mirror}, which is particularly suited to constrained domains such as $\R^n_{++}$. The mirror Langevin algorithm is a discretization of Langevin dynamics on a Riemannian manifold, where the geometry is defined by the Hessian of a convex potential function $\phi(\cdot)$. This approach allows for efficient sampling from distributions supported on manifolds or constrained subsets of Euclidean space. The HRLMC updating rule writes
\begin{equation} \label{eq:sample_z1}
\bz_1^{(t+1)} = \nabla \phi^\star \left( \nabla \phi(\bz_1^{(t)}) - \gamma \nabla U(\bz_1^{(t)}) + \sqrt{2\gamma \nabla^2 \phi(\bz_1^{(t)})} \boldsymbol{\varepsilon}^{(t)} \right),
\end{equation}
where $\boldsymbol{\varepsilon}^{(t)} \sim \mathcal{N}(0, \bI_n)$ and  $\phi^\star(\cdot)$ is the Legendre transform of $\phi(\cdot)$. In the considered Poisson inversion context, the mirror function $\phi(\cdot)$
is set as the Burg entropy  \eqref{eq:burg} and its gradient and Hessian are given\footnote{The inversion and exponentiation operations should be understood as component-wise.} by $\nabla \phi(\bz) = - \frac{1}{\bz}$ and  $\nabla^2 \phi(\bz) = \frac{1}{\bz^2} \bI_n$, respectively. This sampling strategy guarantees positivity and adapts to the manifold geometry. When the potential $U(\cdot)$ is given by \eqref{eq:potential}, its gradient writes 
\begin{equation}\label{eq:gradientU}
    \nabla U(\bz_1) = \beta \nabla g(\bz_1) + \frac{1}{\rho \bz_2} + \left(1 - \tfrac{1}{\rho}\right)\frac{1}{\bz_1}.
\end{equation}
Thus, the practical implementation of the HRLMC iterations only requires the gradient $\nabla g(\cdot)$ of the regularization potential to be explicit or easily computable -- a quantity also referred to as the prior score. This encompasses several recent and powerful data-driven priors, ranging from denoiser-based regularizations (see Appendix \ref{sec:denoisers}) to those derived from generative models. Moreover, the HRLMC algorithm has been shown to exhibit favorable convergence properties under certain conditions, including relative strong convexity and Lipschitz-smoothness of the potential function $U(\cdot)$. These conditions are further discussed in Appendix \ref{HRLMC_assumptions}, in particular when the potential $g(\cdot)$ derives from the RED paradigm.

\newcommand{\algocomment}[1]{ \STATEx {\color[rgb]{0.5,0.8,0.5}{\% \textit{#1}}}}
\renewcommand{\algorithmicrequire}{\textbf{Input:}}
\renewcommand{\algorithmicensure}{\textbf{Output:}}
\begin{algorithm}[!htp]
\caption{HRLwSGS algorithm for Bayesian Poisson inversion.}
\label{alg:HRLwSGS}
\begin{algorithmic}[1]
\REQUIRE Observation $\by$, degradation matrix $\bH=\{h_{ij}\}$, regularization parameter $\beta$, coupling parameter $\rho$, step-size $\gamma$, number of burn-in iterations $N_{\mathrm{bi}}$, total number of iterations $N_{\mathrm{MC}}$
\renewcommand{\algorithmicrequire}{\textbf{Initialization:}}
\REQUIRE $\bn^{(0)}$, $\bx^{(0)}$, $\bz_1^{(0)}$, $\bz_2^{(0)}$
\FOR{$t = 0$ to $N_{\mathrm{MC}} - 1$}
    \STATE Sample count variable from \eqref{eq:sample_n}:
    \begin{equation*}
    \bn^{(t+1)} \sim \prod_{i=1}^{m} \text{Multinomial} \left\{y_i, \left( \frac{h_{i1} x_1^{(t)}}{\sum_{j=1}^m h_{ij} x_j^{(t)}}, \dots, \frac{h_{in} x_n^{(t)}}{\sum_{j=1}^m h_{ij} x_j^{(t)}} \right) \right\}
    \end{equation*}
    
    \STATE Sample variable of interest from \eqref{eq:sample_x}:
    \begin{equation*}
    \bx^{(t+1)} \sim \prod_{j=1}^n \text{Gamma}\left( x_j^{(t)}; \sum_{i=1}^m n_{ij}^{(t+1)} + \frac{1}{\rho} + 1,\; \alpha \sum_{i=1}^m h_{ij} + \frac{1}{\rho z_{2j}^{(t)}} \right)
    \end{equation*}
    
    \STATE Sample splitting variable from \eqref{eq:sample_z1} using HRLMC:
    \begin{equation*}
    \bz_1^{(t+1)} = \nabla \phi^* \left( \nabla \phi(\bz_1^{(t)}) - \gamma \nabla U(\bz_1^{(t)}) + \sqrt{2\gamma \nabla^2 \phi(\bz_1^{(t)}) } \boldsymbol{\varepsilon}^{(t)} \right),\quad \boldsymbol{\varepsilon}^{(t)} \sim \mathcal{N}(\boldsymbol{0}, \mathbf{I})
    \end{equation*}

    \STATE  Sample splitting variable from \eqref{eq:sample_z2}:
    \begin{equation*}
    \bz_2^{(t+1)} \sim \prod_{j=1}^n \text{InvGamma}\left( z_{2j}^{(t)};\, \frac{2}{\rho},\, \frac{x_j^{(t+1)} + z_{1j}^{(t+1)}}{\rho} \right)
    \end{equation*}
\ENDFOR
\ENSURE Collection of samples $\left\{\bx^{(t)}, \bz_1^{(t)}, \bz_2^{(t)}\right\}_{t = N_{\mathrm{bi}} + 1}^{N_{\mathrm{MC}}}$
\end{algorithmic}
\end{algorithm}

\subsection{Discussion on the splitting strategy, its theoretical and practical implications} \label{sec:discussion}

In the proposed framework, the introduction of the first auxiliary variable makes two splitting strategies based on data augmentation  theoretically possible. These differ in the direction of the divergence: $d_h(\bz_1, \bx)$ or $d_h(\bx, \bz_1)$. While both formulations yield valid augmented distributions, they lead to significantly different conditional structures, which in turn drive the theoretical properties of the sampler, its implementation, and its practical robustness.

In the first strategy, based on the divergence $d_h(\bz_1, \bx)$ adopted in Section \ref{proposed_framework}, the conditional distribution $p(\bx | \bz_1, \bn)$ is a Generalized Inverse Gaussian (GIG) distribution. While sampling from this distribution is slightly more complex than from a standard gamma distribution, it remains tractable. Furthermore, the potential $U(\bz_1)$ associated with the conditional $p(\bz_1 | \bx)$ exhibits favorable properties: its dependence on $\bz_1$ is smoother and more regular. This structure simplifies the convergence analysis of the HRLMC sampler. In particular, the relative strong convexity constant remains positive under mild assumptions on the regularization potential $g(\cdot)$, and the commutator bound between $\nabla^2 \phi(\cdot)$ and $\nabla^2 U(\cdot)$ stays moderate as long as $\nabla^2 g(\cdot)$ is bounded. Consequently, the required Lipschitz constant of $g(\cdot)$ can be relatively large without jeopardizing convergence.

In contrast, the alternative strategy, which relies on the divergence $d_h(\bx, \bz_1)$, yields a conditional distribution $p(\bx|\bz_1,\bn)$ that factorizes into gamma distributions. This allows faster and simpler sampling steps for $\bx$, which may appear advantageous in practice. However, this gain in simplicity comes at a theoretical cost. The potential $U(\bz_1)$ now involves terms such as $x_j / z_{1j}$, which cause gradients and Hessians to become ill-conditioned when $z_{1j}$ approaches zero. As a result, the convergence analysis requires stricter assumptions on the boundedness of both $\bx$ and $\bz_1$, and above all, a tighter control of the Lipschitz constant $L_{\text{D}}$ of the regularization $g(\cdot)$. For example, when $g(\cdot)$ derives from a prior associated with RED, ensuring the positivity of the relative strong convexity constant imposes that  
$L_{\text{D}} < \epsilon_{\bz_1}^4 / C_{\bz_1}^4, \quad \text{with } 0 < \epsilon_{\bz_1} \leq z_{1j} \leq C_{\bz_1}.$  
This condition becomes particularly restrictive when the support of $\bz_1$ is wide.

While both strategies are theoretically sound and elegant in structure, the first approach, based on $d_h(\bz_1, \bx)$, strikes a better balance between theoretical guarantees and practical flexibility. It is compatible with a wider range of regularization functions $g(\cdot)$, including those with moderate Lipschitz constants. The second strategy, is simpler from an operational point of view but suffers from stricter theoretical constraints. For these reasons, the first strategy is generally more suitable for high-dimensional inverse problems involving data-driven potentials $g(\cdot)$, and is the one adopted in this work.

\section{Numerical Experiments}\label{sec:experiments}

In this section, we demonstrate the effectiveness of the proposed Bayesian sampling method by conducting a series of experiments related to Poisson inverse problems, namely image denoising, image deblurring, and positron emission tomography (PET) reconstruction. These tasks have been selected to demonstrate the ability of the proposed method to handle inverse problems of varying difficulty, including variations in the conditioning of the forward operator $\bH$, the dimensionality of the images $\bx$ and $\by$, and the severity of the noise $\alpha$. 

For denoising and deblurring, we use 30 RGB images of size $256 \times 256$ from the ImageNet dataset~\cite{deng2009imagenet}. For the PET experiment, we use a standard $128 \times 128$ synthetic phantom. In all experiments, the measurements are corrupted with Poisson noise, which is typical for photon-limited imaging scenarios. For the denoising and deblurring tasks, the noise level is controlled by the scaling factor $\alpha > 0$, which adjusts the intensity of the clean image before applying the Poisson degradation. It is worth noting that a lower value of $\alpha$ corresponds to a higher level of noise, i.e.,  a lower signal-to-noise ratio. To evaluate the proposed method under varying noise conditions, two distinct noise levels have been considered: $\alpha = 10$ (severe noise) and $\alpha = 40$ (moderate noise). 

The HRLwSGS sampling algorithm described in Section \ref{sec:proposed_algorithm} is implemented when the potential $g(\cdot)$ is defined as the RED potential (see Appendix \ref{subsec:RED} for more details) and the corresponding denoiser $\mathsf{D}_{\nu}(\cdot)$ is the DRUNet. It is worth noting that this denoiser is used off-the-shelf, i.e.,  it has been pretrained on images corrupted by Gaussian noise and not to specifically handled Poisson noise \cite{zhang2021plug}. The resulting inversion method is referred to as RED-HRLwSGS. To illustrate the versatility of the proposed Bayesian framework, which can accommodate various forms of regularization, the proposed algorithm has been also instantiated using a Bregman score denoiser~\cite{hurault2023convergent} (see Appendix \ref{sec:bregman_score_denoiser}). The corresponding variant of the algorithm will be referred to as BSD-HRLwSGS. We compare these methods with several state-of-the-art algorithms specifically designed for solving inverse problems under Poisson noise. These include the optimization-based ADMM algorithm,  TV-PIDAL~\cite{figueiredo2010restoration} and two Monte Carlo sampling methods: \emph{i)} TV-SPA which leverages an AXDA strategy and a TV regularization~\cite{Vono2019icassp} and \emph{ii)} RPnP-ULA which relies on a reflected Langevin scheme and the use of a denoiser, chosen as DRUNet (RED-RPnP-ULA) or BSD (BSD-RPnP-ULA)~\cite{melidonis2023efficient}. The PIDAL optimization-based method provides point estimates, typically corresponding to the maximum a posteriori (MAP) solution. In contrast, sampling-based methods such as RPnP-ULA, TV-SPA and the proposed algorithm  HRLwSGS draw samples from the posterior distribution. For these methods, the solution to the Poisson inversion problem has been chosen as the minimum mean square error (MMSE) estimator, which is approximated by averaging the generated samples as follows 
\begin{equation}
\hat{\mathbf{x}}_{\mathrm{MMSE}} = \frac{1}{N_{\mathrm{MC}}-N_{\mathrm{bi}}} \sum_{t=N_{\mathrm{bi}}+1}^{N_{\mathrm{MC}}} \mathbf{x}^{(t)}
\end{equation}
where $N_{\mathrm{MC}}$  is the total number of iterations including $N_{\mathrm{bi}}$ burn-in iterations. Besides, these sampling-based methods not only enable point estimation, but also the quantification of uncertainty. Therefore, the results provided by these methods will be granted with uncertainty quantification in the form of pixelwise posterior standard deviation and coverage maps.  For all tasks, the total number of iterations and the burn-in iterations of the sampling-based methods have been set to $N_{\mathrm{MC}}=25\,000$ and $N_{\mathrm{bi}}=10\,000$ respectively. 

In addition to visual inspection, the methods are compared with respect to several quantitative figures-of-merit. Peak signal-to-noise ratio (PSNR) in decibels (dB) and structural similarity index (SSIM) \cite{wang2004image} are considered as image quality metrics (the higher the score, the better the reconstruction). To capture perceptual differences that are more aligned with human visual perception, we also include the Learned Perceptual Image Patch Similarity (LPIPS) \cite{zhang2018unreasonable} metric, with lower scores denoting closer resemblance to the reference image. 

Further information regarding the experimental settings and parameters of the compared methods are reported in Appendix \ref{app:parameters}.

\subsection{Poisson image denoising}

This experiment assesses the performance of the compared algorithms where restoring images only corrupted by Poisson noise. In this simplified experimental scheme, the degradation operator is the identity matrix, $\bH = \bI$, meaning that the noise is applied directly to the image, with no additional transformations, such as convolution or masking, being applied. This setup enables a dedicated evaluation of the denoising capabilities of each method. 

\begin{table}[!h]
\setlength{\tabcolsep}{2pt}
\centering
\caption{Denoising experiment: average performance and corresponding standard deviations.\label{tab:imagenet_denoising}}%
\begin{tabular}{lllccccc} 
\toprule
& & {\textbf{PSNR}}(dB)$\uparrow$ & {\textbf{SSIM}}$\uparrow$ & {\textbf{LPIPS}} $\downarrow$  \\
\midrule
\multirow{6}{*}{\rotatebox{90}{{\textbf{$\alpha=10$}}}}
	& TV-PIDAL & 24.12±2.21 & 0.715±0.076 & 0.377±0.025 \\
    & TV-SPA & 22.49±2.32 & 0.690±0.059 & 0.445±0.067 \\
	& RED-RPnP-ULA & \underline{25.10±2.00} & \underline{0.723±0.064} & \textbf{0.325±0.050} \\
	& BSD-RPnP-ULA & 23.71±2.01 & 0.701±0.081 & 0.387±0.032\\
	& RED-HRLwSGS & 24.89±2.08 & 0.719±0.081 & 0.333±0.072 \\
	& BSD-HRLwSGS & \textbf{25.50±1.92} &\textbf{0.741±0.064} & \underline{0.329±0.067} \\
\midrule
\multirow{6}{*}{\rotatebox{90}{{\textbf{$\alpha=40$}}}}
	& TV-PIDAL & 25.54±2.86 & 0.736±0.082 & 0.338±0.064 \\
	& TV-SPA & 23.87±3.65 & 0.713±0.078 & 0.382±0.058 \\
	& RED-RPnP-ULA & \underline{26.74±2.33}  & \underline{0.785±0.054}  & \underline{0.269±0.047} \\
	& BSD-RPnP-ULA & 25.37±2.71 & 0.756±0.036 & 0.332±0.053 \\
	& RED-HRLwSGS   & 26.01±2.29 & 0.779±0.067 & 0.274±0.061 \\
	& BSD-HRLwSGS & \textbf{26.88±2.62} & \textbf{0.825±0.032} & \textbf{0.229±0.041} \\
\bottomrule
\end{tabular}
\end{table}

Quantitative results are reported in Table~\ref{tab:imagenet_denoising}. For a severe noise level ($\alpha = 10$), BSD-HRLwSGS achieves the best overall performance, with the highest PSNR (25.50 dB) and SSIM (0.741). This demonstrates its ability to preserve both fidelity and structural details under severe degradation. RED-RPnP-ULA  produces competitive results (PSNR of 25.10dB and SSIM of 0.723), but remains slightly inferior in terms of structural similarity. TV-PIDAL achieves noticeably lower accuracy, highlighting the limitations of classical variational approaches in this challenging regime. Finally, TV-SPA, despite its Bayesian formulation, struggles to cope with strong noise, as evidenced by its reduced quantitative scores and weaker perceptual quality.

Under moderate noise conditions ($\alpha = 40$), all methods perform better than in the low-noise regime. Among them, BSD-HRLwSGS consistently delivers the best overall results, with the highest PSNR (26.88dB), SSIM (0.825), and lowest LPIPS (0.229), indicating superior perceptual quality. Although RED-RPnP-ULA attains a competitive PSNR (26.74dB) and favorable perceptual scores, its results are slightly inferior to those of BSD-HRLwSGS. Thus, BSD-HRLwSGS  clearly outperforms both classical variational methods and  other sampling-based strategies.

Figure~\ref{fig:imagenet_denoising} presents representative qualitative results for both noise levels ($\alpha = 40$ for the first two columns and $\alpha = 10$ for the last two columns). These visual comparisons are consistent with the quantitative evaluation: the proposed HRLwSGS method yields reconstructions that are visually closer to the ground truth, with sharper details, reduced artifacts, and improved perceptual quality.

\newlength\colgap
\newlength\imw
\setlength\labelwidth{.05\linewidth}                
\setlength\colgap{.1pt}                               
\setlength\imw{(\linewidth - \labelwidth - \colgap)/{4}}  
\newcolumntype{M}{@{}m{\imw}@{}}                     
\newcolumntype{L}{@{}m{\labelwidth}@{}}              

\newcommand{\rotlabel}[1]{%
  \begin{minipage}[c][\imw][c]{\labelwidth}\centering
    \rotatebox{90}{\tiny#1}
  \end{minipage}%
}
\newcommand{\sqimg}[1]{%
  \makebox[\imw][c]{\includegraphics[width=\imw,height=\imw]{#1}}%
}

\begin{figure}[!htp]
\centering
\setlength{\tabcolsep}{0pt}
\renewcommand{\arraystretch}{1}

\begin{tabular}{L  M  M  M  M}
& \makebox[\imw]{\scriptsize $\alpha=40$} &
  \makebox[\imw]{\scriptsize $\alpha=40$} &
  \makebox[\imw]{\scriptsize $\alpha=10$} &
  \makebox[\imw]{\scriptsize $\alpha=10$} \\

\rotlabel{Ground Truth} &
\begin{tikzpicture}[zoomboxarray]
    \node [image node] {\sqimg{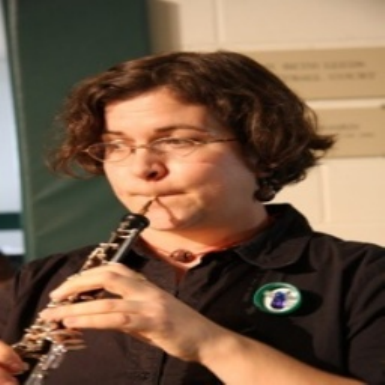}};
	\zoombox{0.3, 0.65}
\end{tikzpicture} &
\begin{tikzpicture}[zoomboxarray]
    \node [image node] {\sqimg{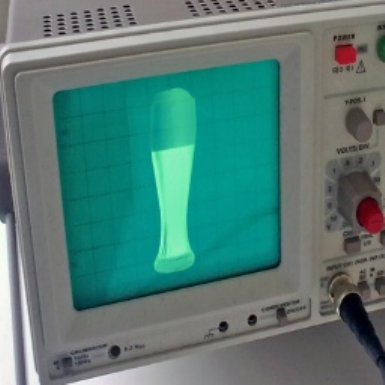}};
	\zoombox{0.88, 0.85}
\end{tikzpicture} &
\begin{tikzpicture}[zoomboxarray]
    \node [image node] {\sqimg{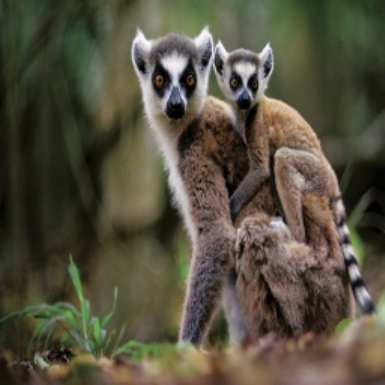}};
	\zoombox{0.45, 0.8}
\end{tikzpicture} &
\begin{tikzpicture}[zoomboxarray]
    \node [image node] {\sqimg{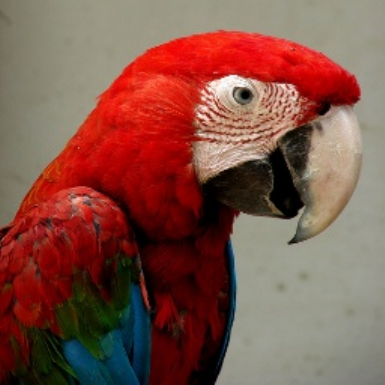}};
	\zoombox{0.6, 0.72}
\end{tikzpicture}\\[-2pt]

\rotlabel{Observation} &
\begin{tikzpicture}[zoomboxarray]
    \node [image node]{\sqimg{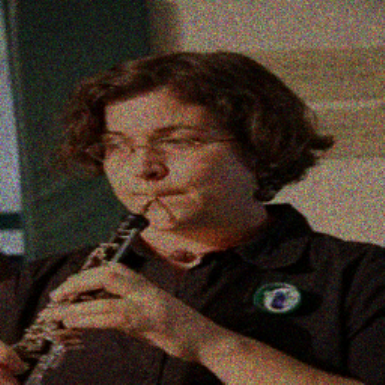}};
	\zoombox{0.3, 0.65}
\end{tikzpicture} &
\begin{tikzpicture}[zoomboxarray]
    \node [image node]{\sqimg{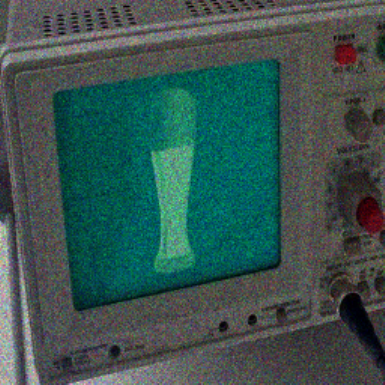} };
	\zoombox{0.88, 0.85}
\end{tikzpicture} &
\begin{tikzpicture}[zoomboxarray]
    \node [image node] {\sqimg{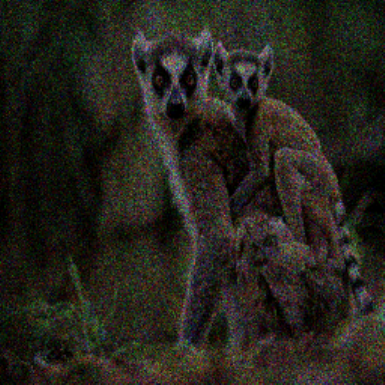}};
	\zoombox{0.45, 0.8}
\end{tikzpicture} &
\begin{tikzpicture}[zoomboxarray]
    \node [image node] {\sqimg{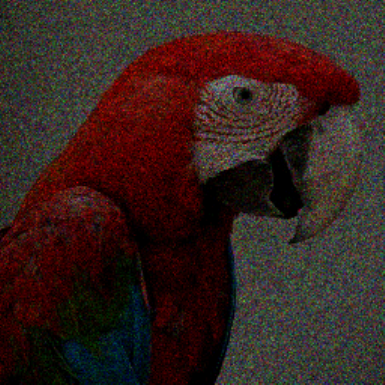}};
	\zoombox{0.6, 0.72}
\end{tikzpicture}\\[-2pt]

\rotlabel{RED-RPnP-ULA} &
\begin{tikzpicture}[zoomboxarray]
    \node [image node]{\sqimg{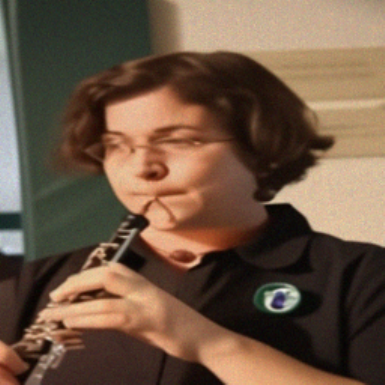} };
	\zoombox{0.3, 0.65}
\end{tikzpicture} &
\begin{tikzpicture}[zoomboxarray]
    \node [image node]{\sqimg{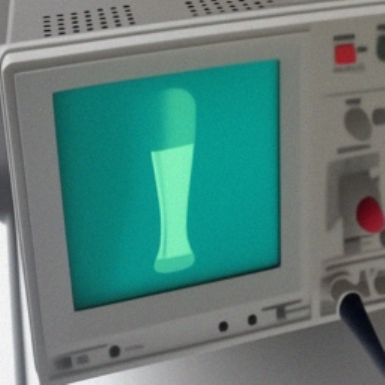}};
	\zoombox{0.88, 0.85}
\end{tikzpicture} &
\begin{tikzpicture}[zoomboxarray]
    \node [image node] {\sqimg{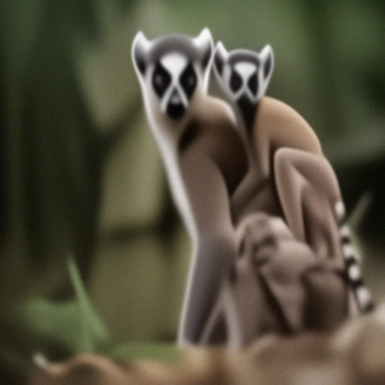} };
	\zoombox{0.45, 0.8}
\end{tikzpicture} &
\begin{tikzpicture}[zoomboxarray]
    \node [image node] {\sqimg{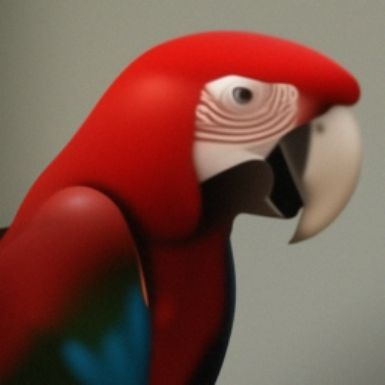} };
	\zoombox{0.6, 0.72}
\end{tikzpicture}\\[-2pt]

\rotlabel{BSD-RPnP-ULA} &
\begin{tikzpicture}[zoomboxarray]
    \node [image node]{\sqimg{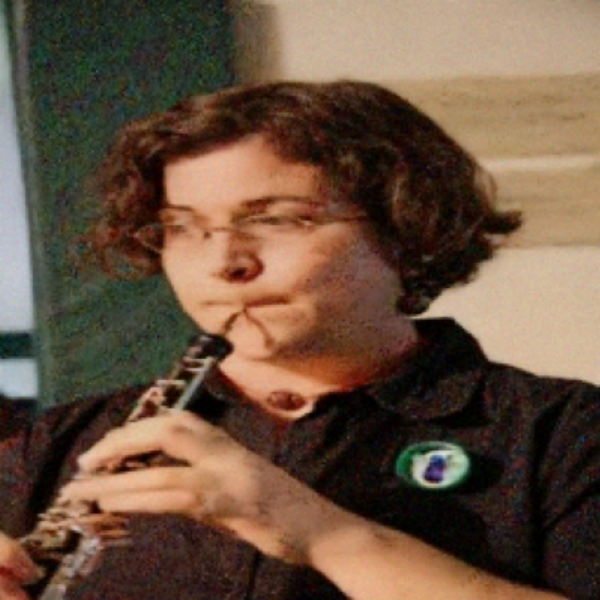}};
	\zoombox{0.3, 0.65}
\end{tikzpicture} &
\begin{tikzpicture}[zoomboxarray]
    \node [image node]{\sqimg{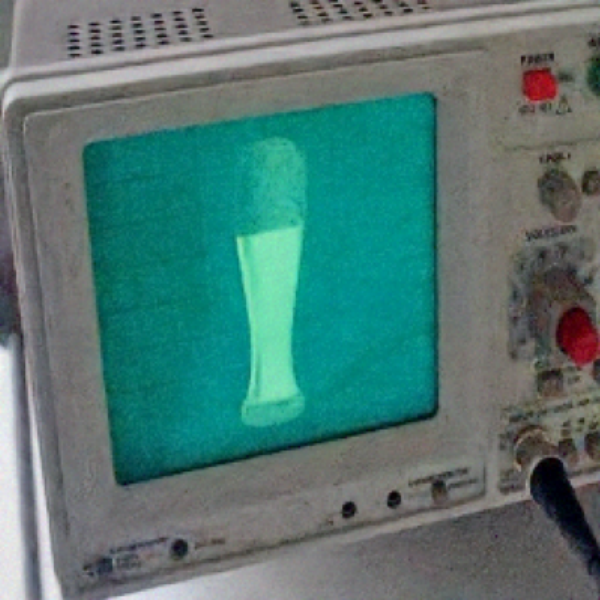}};
	\zoombox{0.88, 0.85}
\end{tikzpicture} &
\begin{tikzpicture}[zoomboxarray]
    \node [image node] {\sqimg{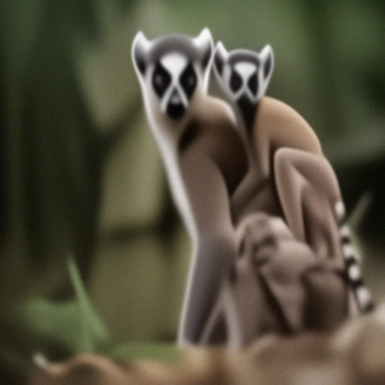}};
	\zoombox{0.45, 0.8}
\end{tikzpicture} &
\begin{tikzpicture}[zoomboxarray]
    \node [image node] {\sqimg{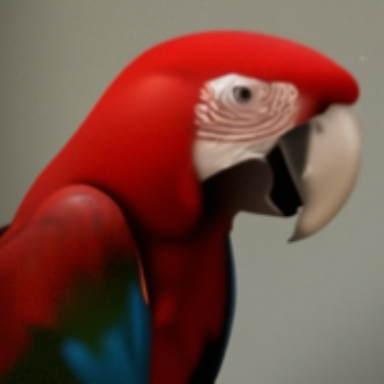} };
	\zoombox{0.6, 0.72}
\end{tikzpicture}\\[-2pt]

\rotlabel{RED-HRLwSGS} &
\begin{tikzpicture}[zoomboxarray]
    \node [image node]{\sqimg{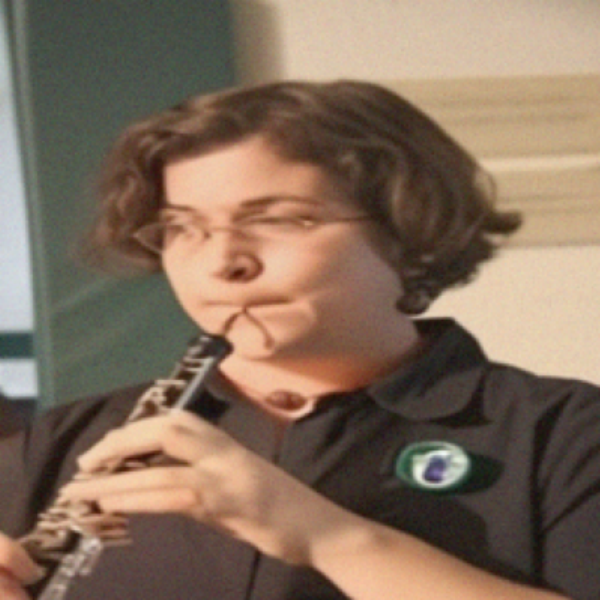}};
	\zoombox{0.3, 0.65}
\end{tikzpicture} &
\begin{tikzpicture}[zoomboxarray]
    \node [image node]{\sqimg{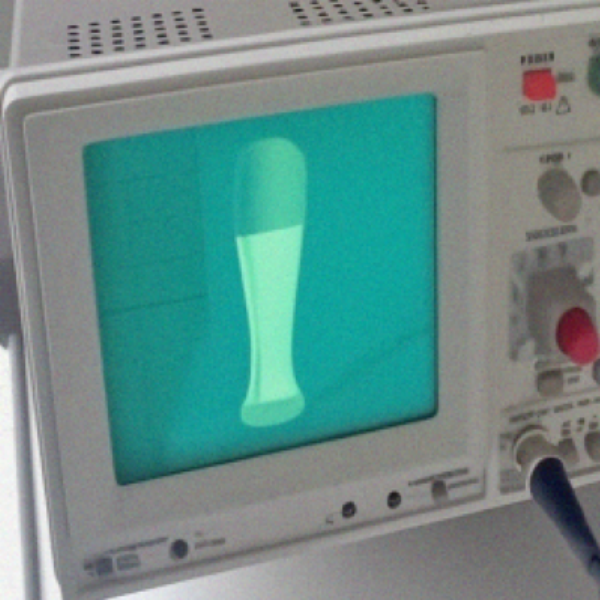} };
	\zoombox{0.88, 0.85}
\end{tikzpicture} &
\begin{tikzpicture}[zoomboxarray]
    \node [image node] {\sqimg{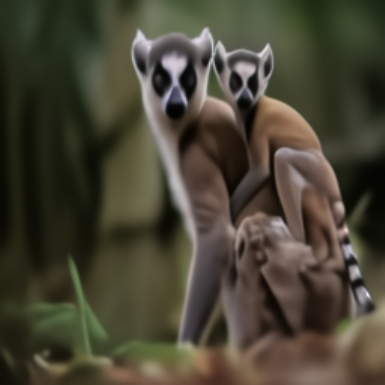} };
	\zoombox{0.45, 0.8}
\end{tikzpicture} &
\begin{tikzpicture}[zoomboxarray]
    \node [image node] {\sqimg{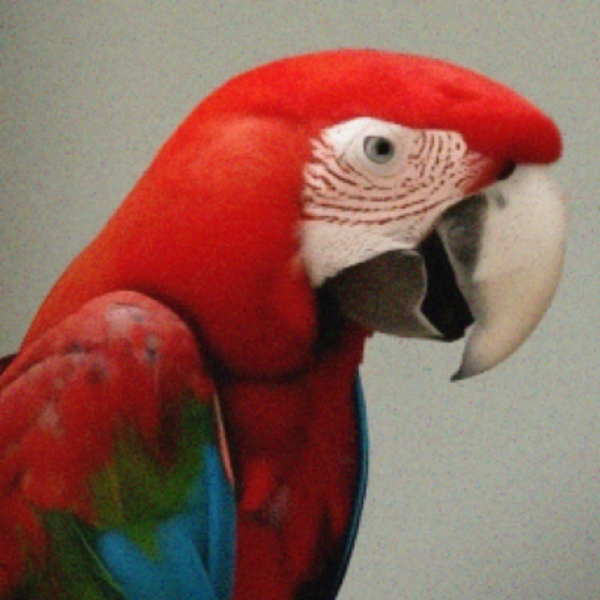}};
	\zoombox{0.6, 0.72}
\end{tikzpicture}\\[-2pt]

\rotlabel{BSD-HRLwSGS} &
\begin{tikzpicture}[zoomboxarray]
    \node [image node]{\sqimg{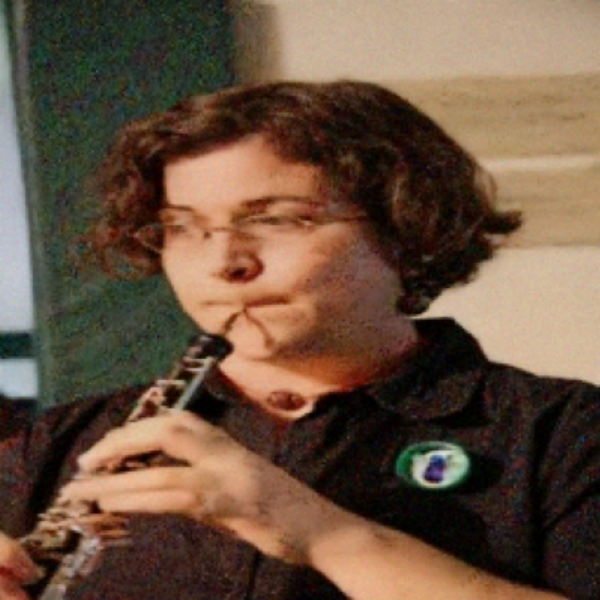}};
	\zoombox{0.3, 0.65}
\end{tikzpicture} &
\begin{tikzpicture}[zoomboxarray]
    \node [image node]{\sqimg{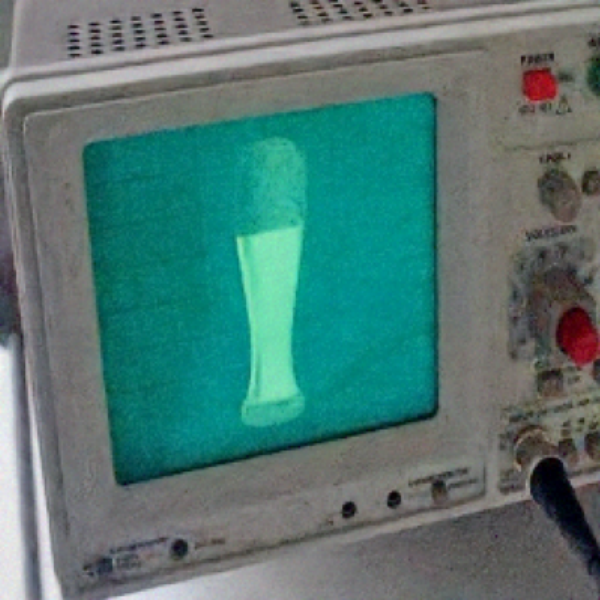}};
	\zoombox{0.88, 0.85}
\end{tikzpicture} &
\begin{tikzpicture}[zoomboxarray]
    \node [image node] {\sqimg{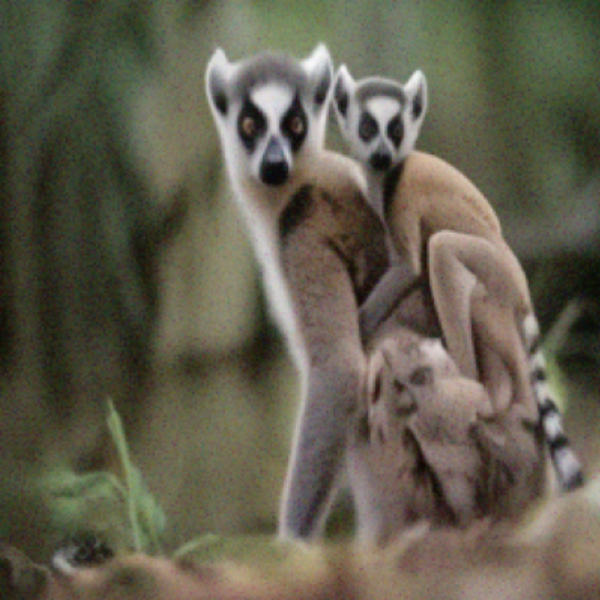} };
	\zoombox{0.45, 0.8}
\end{tikzpicture} &
\begin{tikzpicture}[zoomboxarray]
    \node [image node] {\sqimg{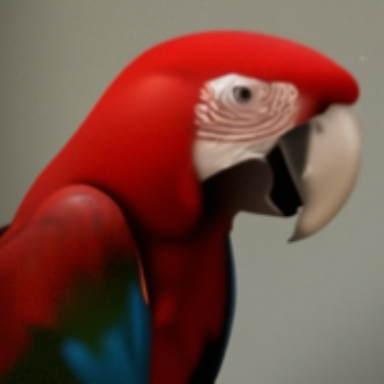} };
	\zoombox{0.6, 0.72}
\end{tikzpicture}\\[-2pt]

\end{tabular}
\caption{Denoising experiment: visual results. Due to space contraints, the results recovered by TV-SPA and TV-PIDAL are not reproduced, as their quality is significantly inferior.}
\label{fig:imagenet_denoising}
\end{figure}

\subsection{Poisson image deconvolution}
We evaluate the proposed method on the Poisson image deconvolution task, where the degradation operator $\bH$ is modeled as a circulant convolution matrix associated with a spatially invariant Gaussian kernel of size $25 \times 25$ and standard deviation $1.6$. This setting simulates realistic optical blur and constitutes a more challenging inverse problem than  the denoising scenario due to the ill-posedness induced by the convolution. Quantitative performance is reported in Table~\ref{tab:imagenet_deblurring} for two noise levels $\alpha = 10$ and $\alpha = 40$, while qualitative results are provided in Figure~\ref{fig:imagenet_deblurring}, and uncertainty quantification is analyzed in Figure~\ref{fig:uncertainty_coverage_map}.

At the high noise level ($\alpha = 10$), the proposed RED-HRLwSGS method achieves the best overall reconstruction quality, with the highest PSNR (23.22dB) and SSIM (0.681), while BSD-HRLwSGS provides a comparable structural consistency (SSIM of 0.677). RED-RPnP-ULA yields similar fidelity (PSNR of 23.08dB, SSIM of 0.672) but slightly lower perceptual quality (LPIPS of 0.398).  Conversely, BSD-RPnP-ULA exhibits a noticeable drop across all metrics, and TV-based methods remain significantly inferior. Visual results in Figure~\ref{fig:imagenet_deblurring} corroborate these observations, showing that RED-HRLwSGS better preserves structures and textures, while avoiding the ringing artifacts visible in BSD-HRLwSGS and RED-RPnP-ULA reconstructions.

At the lower noise level ($\alpha = 40$), all methods demonstrate significant improvement. RED-HRLwSGS achieves the highest PSNR (24.52dB) and SSIM (0.772), whereas RED-RPnP-ULA provides the best perceptual quality (LPIPS of 0.340) with nearly equivalent fidelity (PSNR of 24.11dB). BSD-HRLwSGS remains competitive (PSNR of 23.93dB and SSIM of 0.751). TV-PIDAL still lags behind in perceptual fidelity.

Beyond point estimation, the proposed Bayesian framework enables posterior uncertainty quantification, as illustrated in Figure~\ref{fig:uncertainty_coverage_map}.  For $\alpha = 10$, we report both the pixelwise posterior standard deviation and the coverage map, where each pixel value corresponds to the minimum probability ensuring that the highest posterior density (HPD) interval contains the true intensity. Uncertainty maps reveal localized uncertainty concentrated along textured and edge regions.  Compared to BSD-RPnP-ULA and TV-SPA, the uncertainty maps obtained with RED-HRLwSGS, BSD-HRLwSGS and RED-RPnP-ULA appear more spatially coherent and interpretable, which is consistent with the expected behavior for deconvolution problems. The coverage maps support these observations: the probability that the HPD intervals contain the true intensity remains high (close to 1) across most of the image, with lower values mainly along edges.

\begin{table}[h!]
\setlength{\tabcolsep}{2.5pt}
\centering
\caption{Deblurring experiment: average performance and corresponding standard deviations.\label{tab:imagenet_deblurring}}
\begin{tabular}{lllccccc} 
\toprule
& & {\textbf{PSNR}}(dB)$\uparrow$ & {\textbf{SSIM}}$\uparrow$ & {\textbf{LPIPS}} $\downarrow$ \\
\midrule
\multirow{6}{*}{\rotatebox{90}{{\textbf{$\alpha=10$}}}}
	& TV-PIDAL & 22.21±2.02 & 0.621±0.081 & 0.455±0.048\\
	& TV-SPA & 21.32±2.12 & 0.629±0.065 & 0.460±0.043  \\
	& RED-RPnP-ULA & 23.08±1.87 & 0.672±0.091 & 0.398±0.066 \\
	& BSD-RPnP-ULA & 22.45±1.68 & 0.630±0.027 & 0.451±0.038 \\
    & RED-HRLwSGS   & \textbf{23.22}±1.89 & \textbf{0.681}±0.094 & \textbf{0.396}±0.062 \\
	& BSD-HRLwSGS & 22.98±1.88 & 0.677±0.065 & 0.435±0.044 \\
\midrule
\midrule
\multirow{6}{*}{\rotatebox{90}{{\textbf{$\alpha=40$}}}}
	& TV-PIDAL &  \underline{24.13}±3.00 & 0.728±0.073 & \underline{0.356}±0.057 \\
	& TV-SPA & 22.35±2.88 & 0.684±0.077 & 0.437±0.045 \\
	& RED-RPnP-ULA & 24.11±2.78 & \underline{0.762}±0.088 & \textbf{0.340}±0.053  \\
	& BSD-RPnP-ULA & 23.31±2.13 & 0.739±0.068 & 0.424±0.035\\
    & RED-HRLwSGS   & \textbf{24.52}±1.41 & \textbf{0.772}±0.088 & 0.382±0.054 \\
	& BSD-HRLwSGS & 23.44±2.09 & 0.755±0.081 & 0.422±0.048 \\
\bottomrule
\end{tabular}
\end{table}

\begin{figure}[!htp]
\centering
\setlength{\tabcolsep}{0pt}
\renewcommand{\arraystretch}{1}


\begin{tabular}{L  M  M  M  M}
& \makebox[\imw]{\scriptsize $\alpha=40$} &
  \makebox[\imw]{\scriptsize $\alpha=40$} &
  \makebox[\imw]{\scriptsize $\alpha=10$} &
  \makebox[\imw]{\scriptsize $\alpha=10$} \\

\rotlabel{Ground Truth} &
\begin{tikzpicture}[zoomboxarray]
    \node [image node] {\sqimg{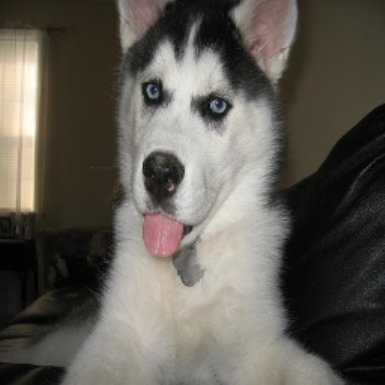}};
	 \zoombox{0.54, 0.74}
\end{tikzpicture} &
\begin{tikzpicture}[zoomboxarray]
    \node [image node] {\sqimg{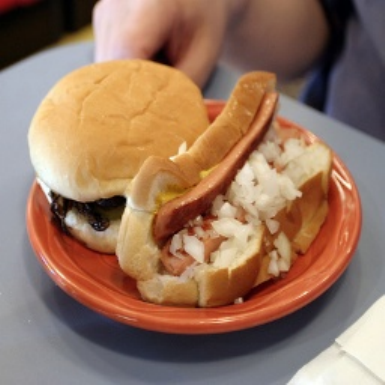}};
	\zoombox{0.58, 0.55}
\end{tikzpicture} &
\begin{tikzpicture}[zoomboxarray]
    \node [image node] {\sqimg{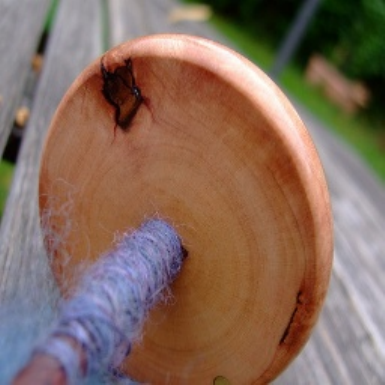}};
	\zoombox{0.3, 0.75}
\end{tikzpicture} &
\begin{tikzpicture}[zoomboxarray]
    \node [image node] {\sqimg{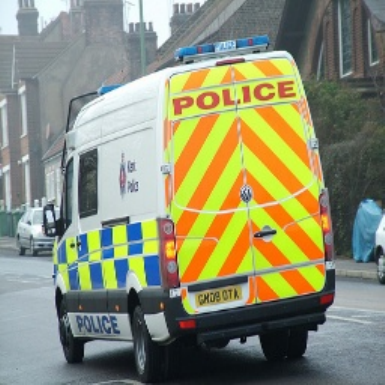}};
	\zoombox{0.52, 0.75}
\end{tikzpicture}\\[-2pt]

\rotlabel{Observation} &
\begin{tikzpicture}[zoomboxarray]
    \node [image node]{\sqimg{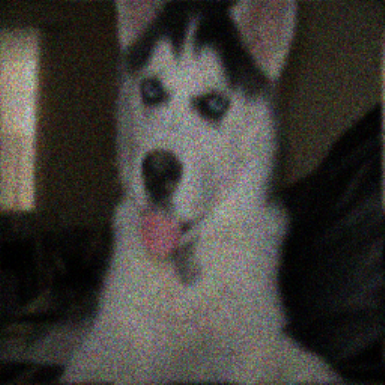}};
	\zoombox{0.54, 0.74}
\end{tikzpicture} &
\begin{tikzpicture}[zoomboxarray]
    \node [image node]{\sqimg{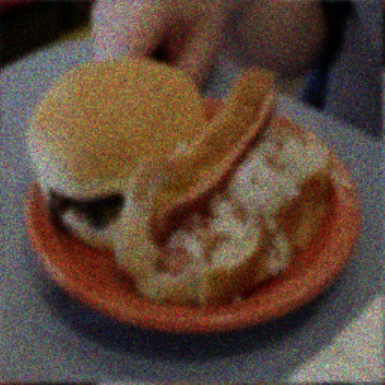} };
	\zoombox{0.58, 0.55}
\end{tikzpicture} &
\begin{tikzpicture}[zoomboxarray]
    \node [image node] {\sqimg{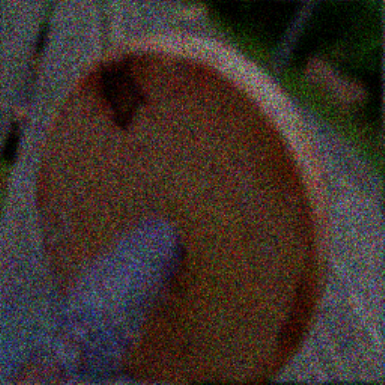}};
	\zoombox{0.3, 0.75}
\end{tikzpicture} &
\begin{tikzpicture}[zoomboxarray]
    \node [image node] {\sqimg{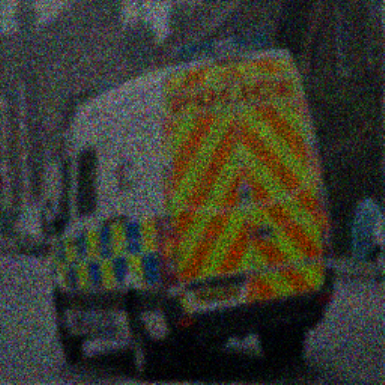}};
	\zoombox{0.52, 0.75}
\end{tikzpicture}\\[-2pt]

\rotlabel{RED-RPnP-ULA} &
\begin{tikzpicture}[zoomboxarray]
    \node [image node]{\sqimg{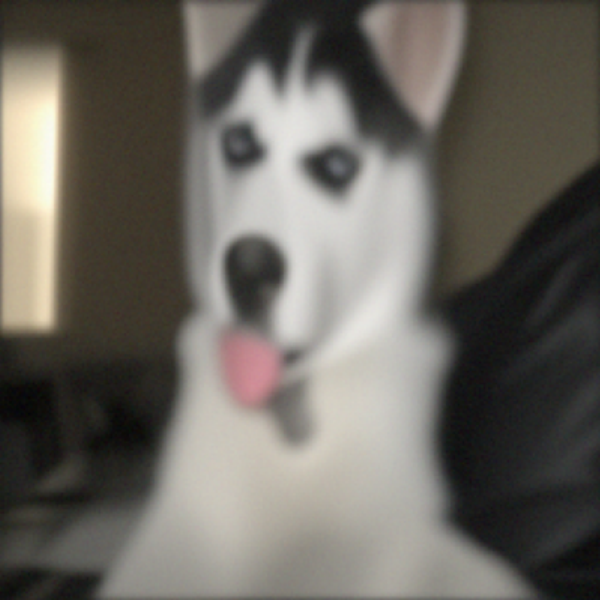} };
	\zoombox{0.54, 0.74}
\end{tikzpicture} &
\begin{tikzpicture}[zoomboxarray]
    \node [image node]{\sqimg{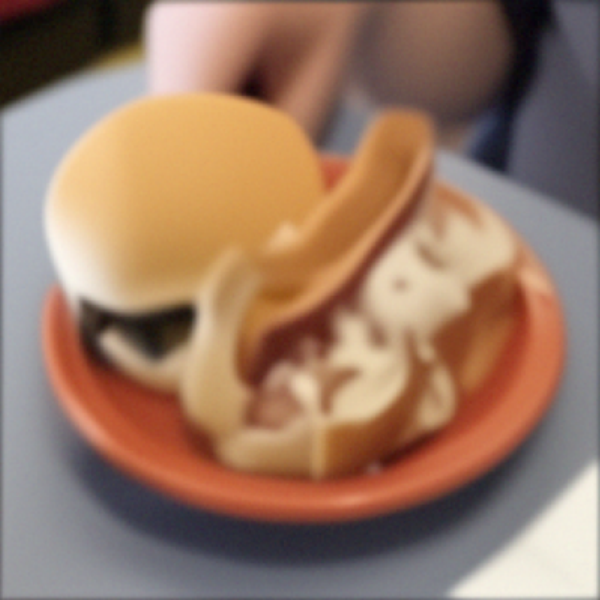}};
	\zoombox{0.58, 0.55}
\end{tikzpicture} &
\begin{tikzpicture}[zoomboxarray]
    \node [image node] {\sqimg{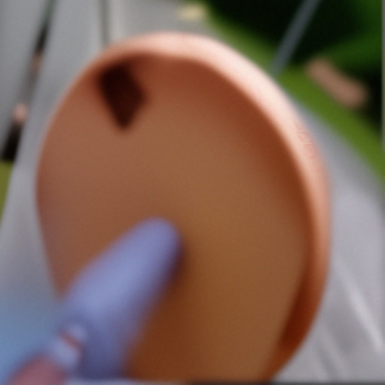} };
	\zoombox{0.3, 0.75}
\end{tikzpicture} &
\begin{tikzpicture}[zoomboxarray]
    \node [image node] {\sqimg{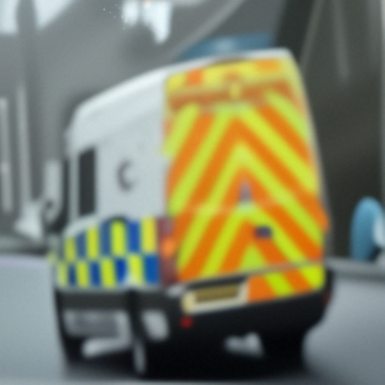} };
	\zoombox{0.52, 0.75}
\end{tikzpicture}\\[-2pt]

\rotlabel{BSD-RPnP-ULA} &
\begin{tikzpicture}[zoomboxarray]
    \node [image node]{\sqimg{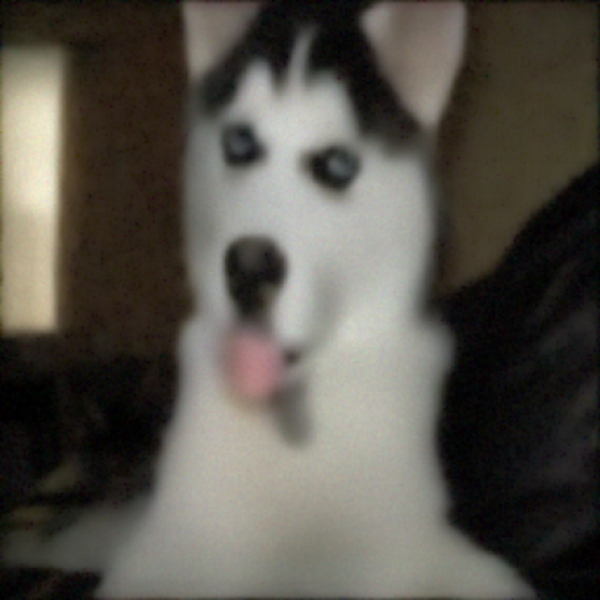}};
	\zoombox{0.54, 0.74}
\end{tikzpicture} &
\begin{tikzpicture}[zoomboxarray]
    \node [image node]{\sqimg{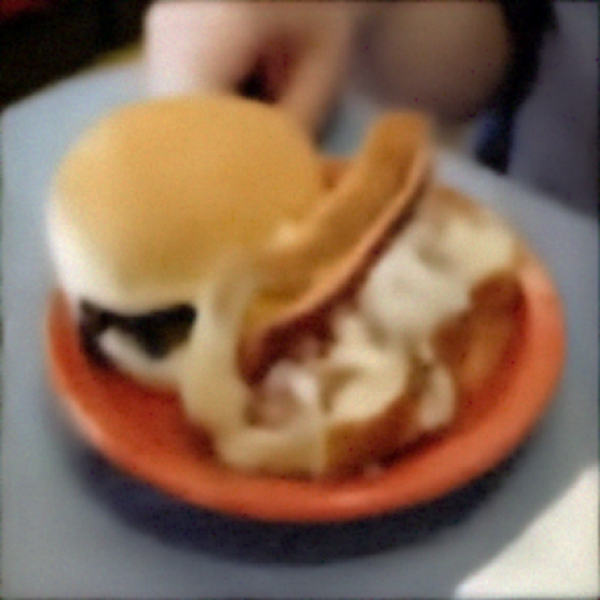}};
	\zoombox{0.58, 0.55}
\end{tikzpicture} &
\begin{tikzpicture}[zoomboxarray]
    \node [image node] {\sqimg{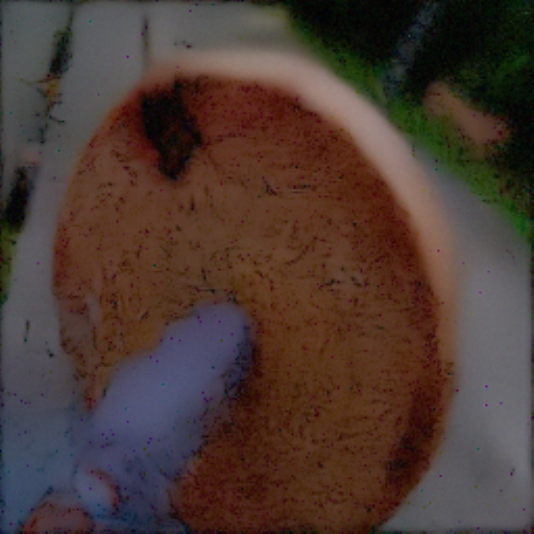}};
	\zoombox{0.3, 0.75}
\end{tikzpicture} &
\begin{tikzpicture}[zoomboxarray]
    \node [image node] {\sqimg{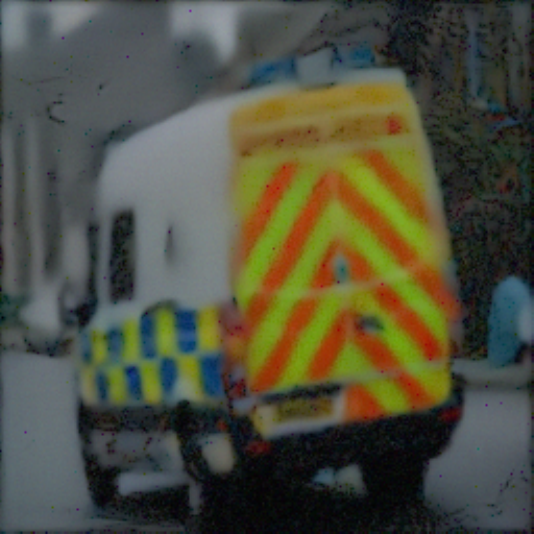} };
	\zoombox{0.52, 0.75}
\end{tikzpicture}\\[-2pt]

\rotlabel{\tiny{RED-HRLwSGS}} &
\begin{tikzpicture}[zoomboxarray]
    \node [image node]{\sqimg{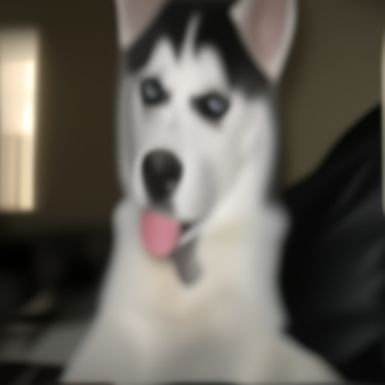}};
    \zoombox{0.54, 0.74}
\end{tikzpicture} &
\begin{tikzpicture}[zoomboxarray]
    \node [image node]{\sqimg{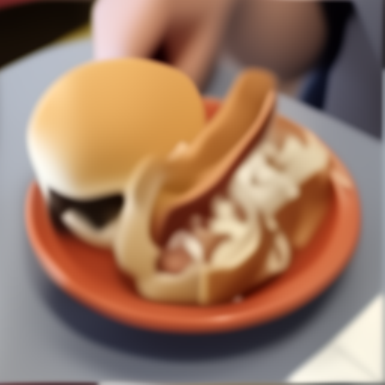} };
	\zoombox{0.58, 0.55}
\end{tikzpicture} &
\begin{tikzpicture}[zoomboxarray]
    \node [image node] {\sqimg{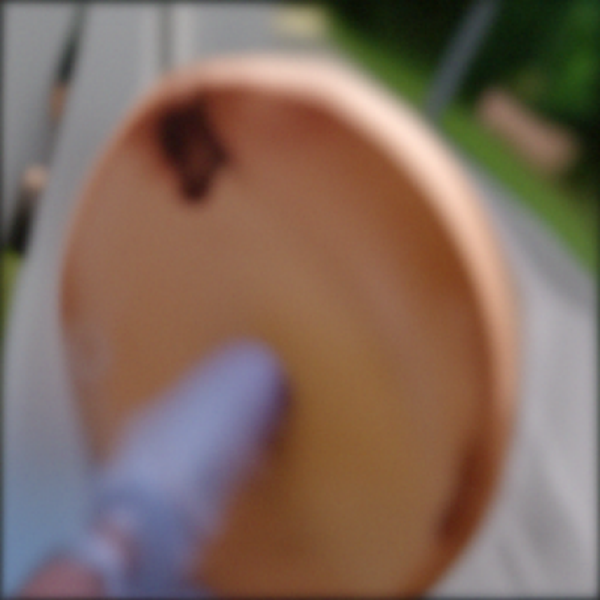} };
	\zoombox{0.3, 0.75}
\end{tikzpicture} &
\begin{tikzpicture}[zoomboxarray]
    \node [image node] {\sqimg{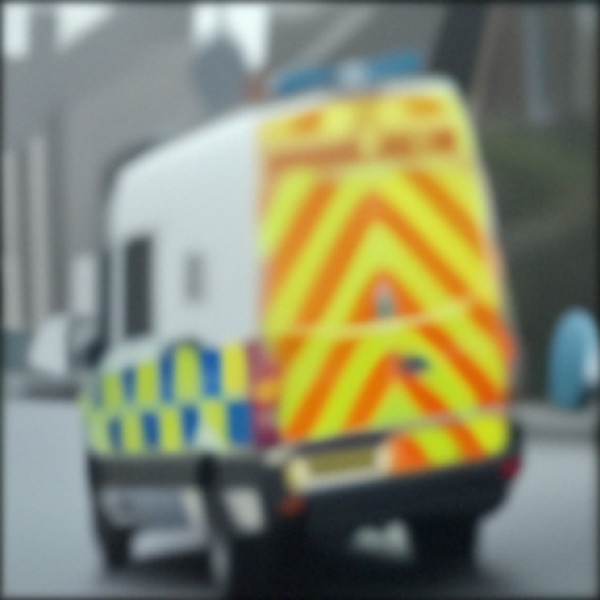}};
	\zoombox{0.52, 0.75}
\end{tikzpicture}\\[-2pt]

\rotlabel{\tiny{BSD-HRLwSGS}} &
\begin{tikzpicture}[zoomboxarray]
    \node [image node]{\sqimg{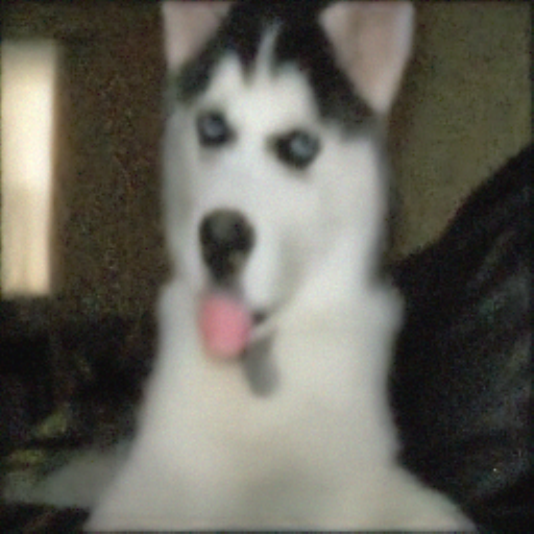}};
	\zoombox{0.54, 0.74}
\end{tikzpicture} &
\begin{tikzpicture}[zoomboxarray]
    \node [image node]{\sqimg{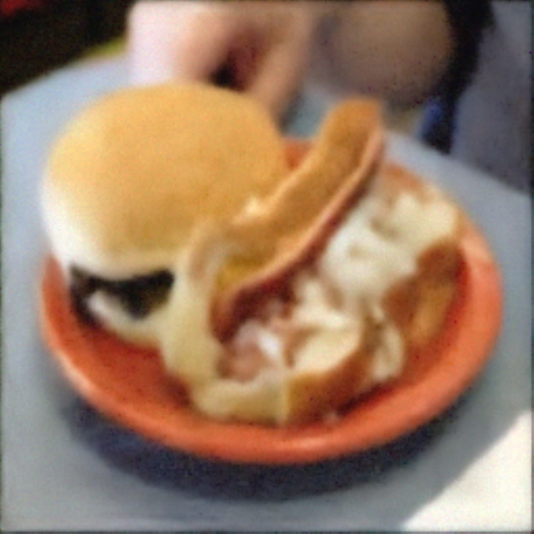}};
	\zoombox{0.58, 0.55}
\end{tikzpicture} &
\begin{tikzpicture}[zoomboxarray]
    \node [image node] {\sqimg{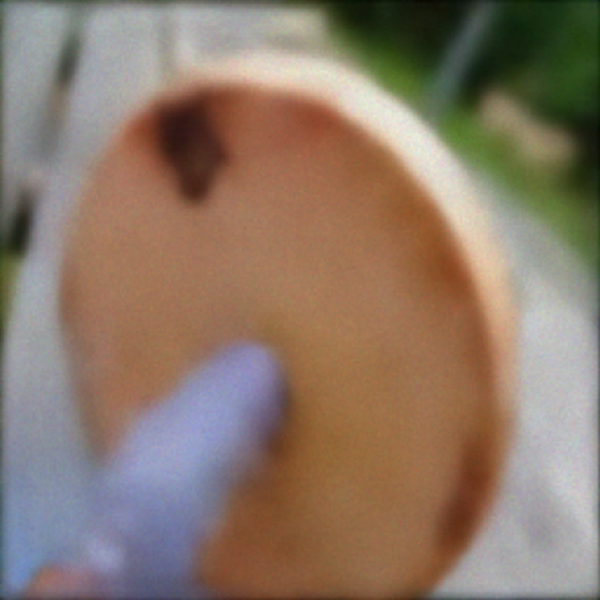} };
	\zoombox{0.3, 0.75}
\end{tikzpicture} &
\begin{tikzpicture}[zoomboxarray]
    \node [image node] {\sqimg{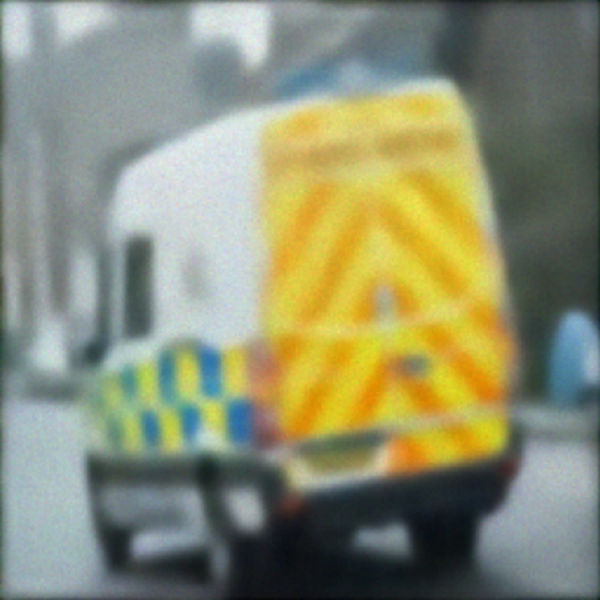} };
	\zoombox{0.52, 0.75}
\end{tikzpicture}\\[-2pt]

\end{tabular}
\caption{Deblurring experiment: visual results. Due to space contraints, the results recovered by TV-SPA and TV-PIDAL are not reproduced, as their quality is significantly inferior.}
\label{fig:imagenet_deblurring}
\end{figure}

\newcommand{\subplotwidth}{0.16\textwidth}
\newcommand{\subplotwidthbar}{0.10\textwidth}
\newcommand{\subplotwidthbis}{.0\textwidth}
\newcommand{\subplotwidthbisbis}{.04\textwidth}
\begin{figure}[!h]
\centering

\begin{tabularx}{0.99\textwidth} { 
   >{\centering\arraybackslash}X 
   >{\centering\arraybackslash}X 
   >{\centering\arraybackslash}X 
   >{\centering\arraybackslash}X 
   >{\centering\arraybackslash}X 
   >{\centering\arraybackslash}X 
   }
   & \tiny RED-HRLwSGS & \tiny BSD-HRLwSGS & \tiny RED-RPnP-ULA & \tiny BSD-RPnP-ULA & \tiny TV-SPA
\end{tabularx}
\includegraphics[width=\subplotwidth,height=\subplotwidth]{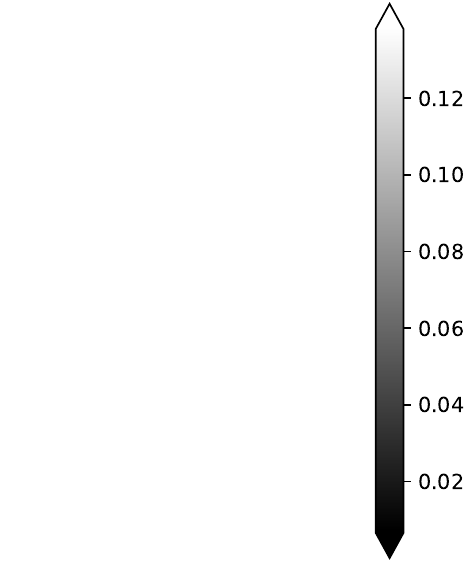} 
\begin{tikzpicture}[zoomboxarray]
  \node[image node]{\includegraphics[width=\subplotwidth,height=\subplotwidth]{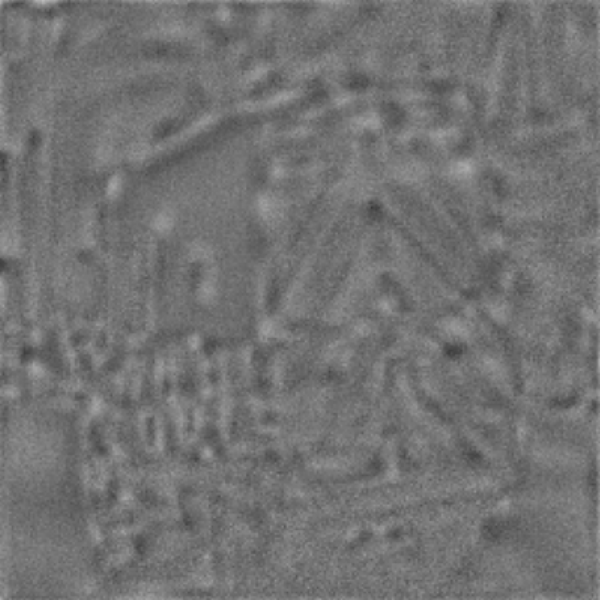}};
\end{tikzpicture} 
\begin{tikzpicture}[zoomboxarray]
  \node[image node]{\includegraphics[width=\subplotwidth,height=\subplotwidth]{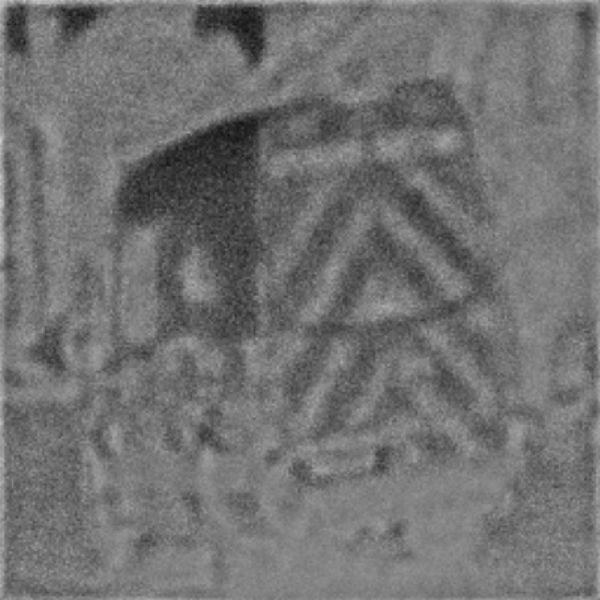}};
\end{tikzpicture} 
\begin{tikzpicture}[zoomboxarray]
  \node[image node]{\includegraphics[width=\subplotwidth,height=\subplotwidth]{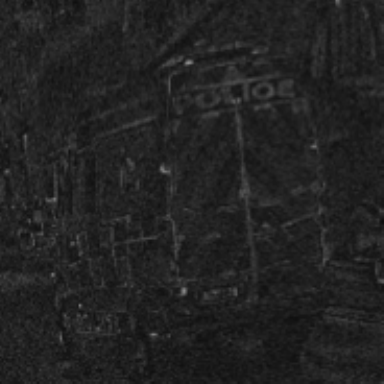}};
\end{tikzpicture} 
\begin{tikzpicture}[zoomboxarray]
  \node[image node]{\includegraphics[width=\subplotwidth,height=\subplotwidth]{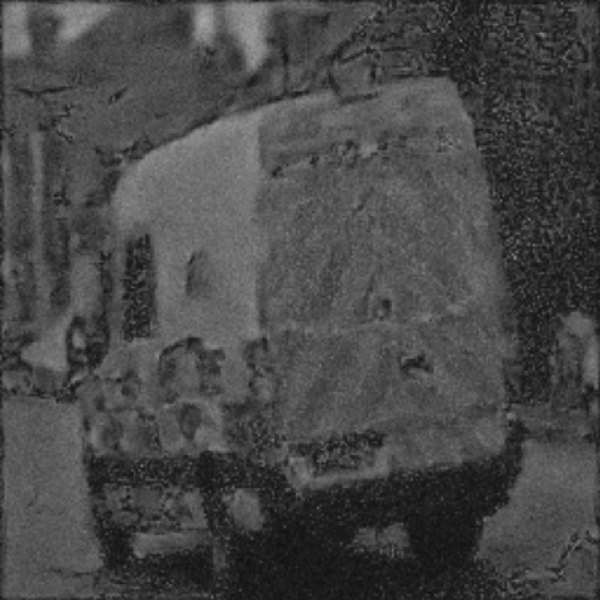}};
\end{tikzpicture} 
\begin{tikzpicture}[zoomboxarray]
  \node[image node]{\includegraphics[width=\subplotwidth,height=\subplotwidth]{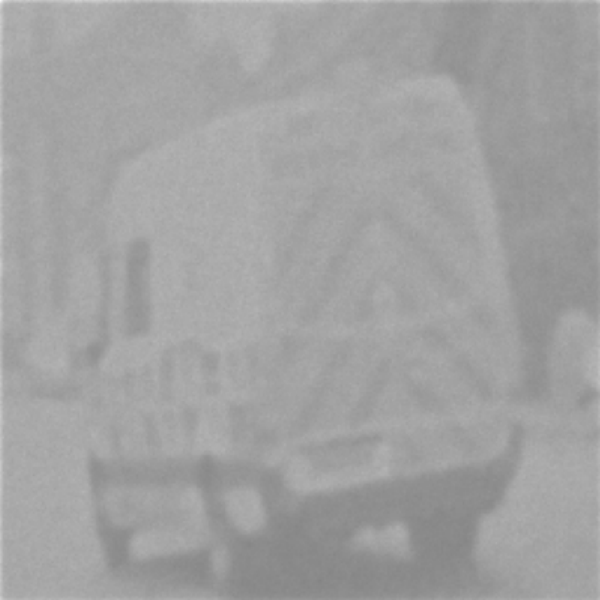}};
\end{tikzpicture} \\
\includegraphics[width=\subplotwidth,height=\subplotwidth]{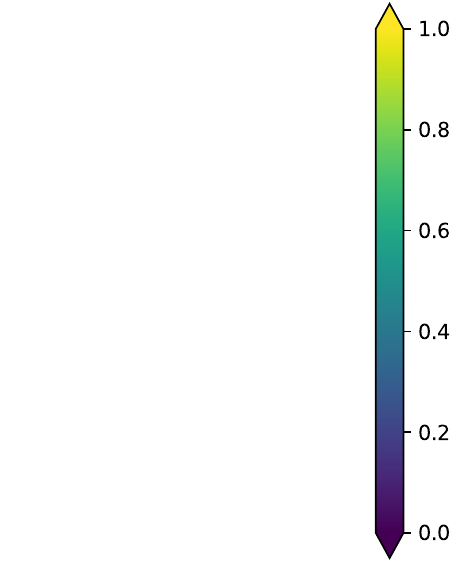} 
\begin{tikzpicture}[zoomboxarray]
  \node[image node]{\includegraphics[width=\subplotwidth,height=\subplotwidth]{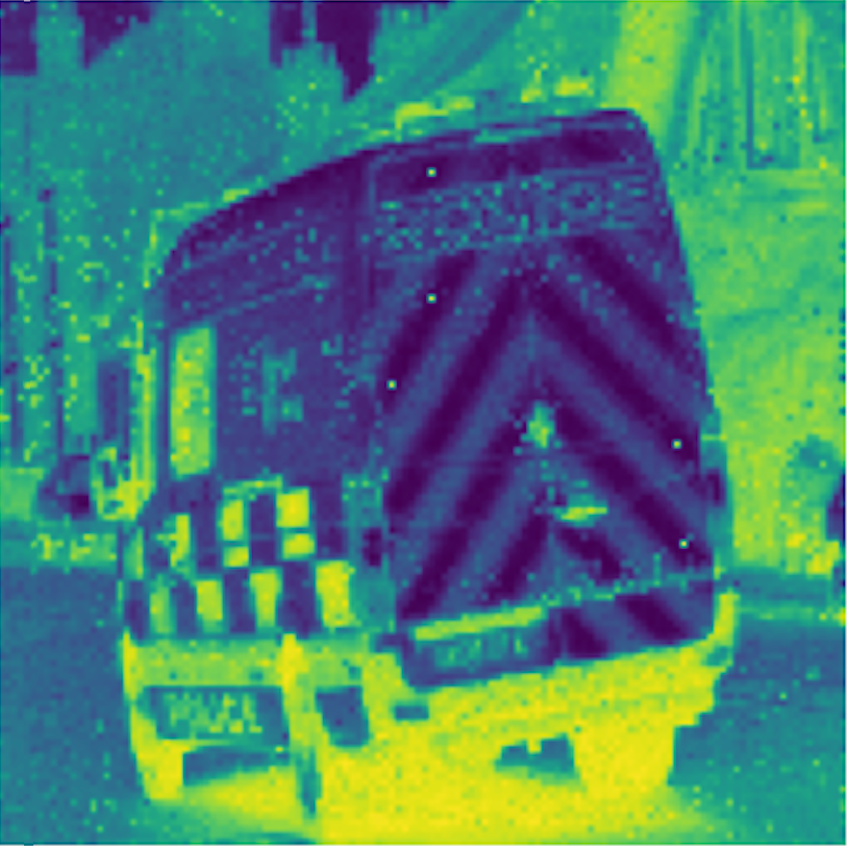}};
\end{tikzpicture} 
\begin{tikzpicture}[zoomboxarray]
  \node[image node]{\includegraphics[width=\subplotwidth,height=\subplotwidth]{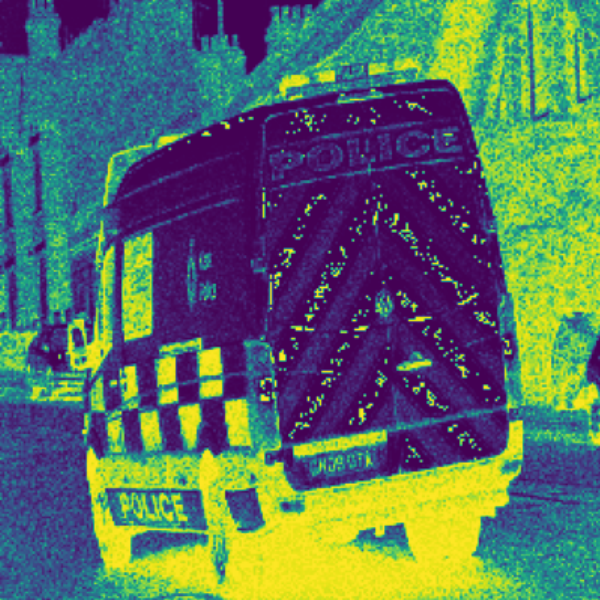}};
\end{tikzpicture} 
\begin{tikzpicture}[zoomboxarray]
  \node[image node]{\includegraphics[width=\subplotwidth,height=\subplotwidth]{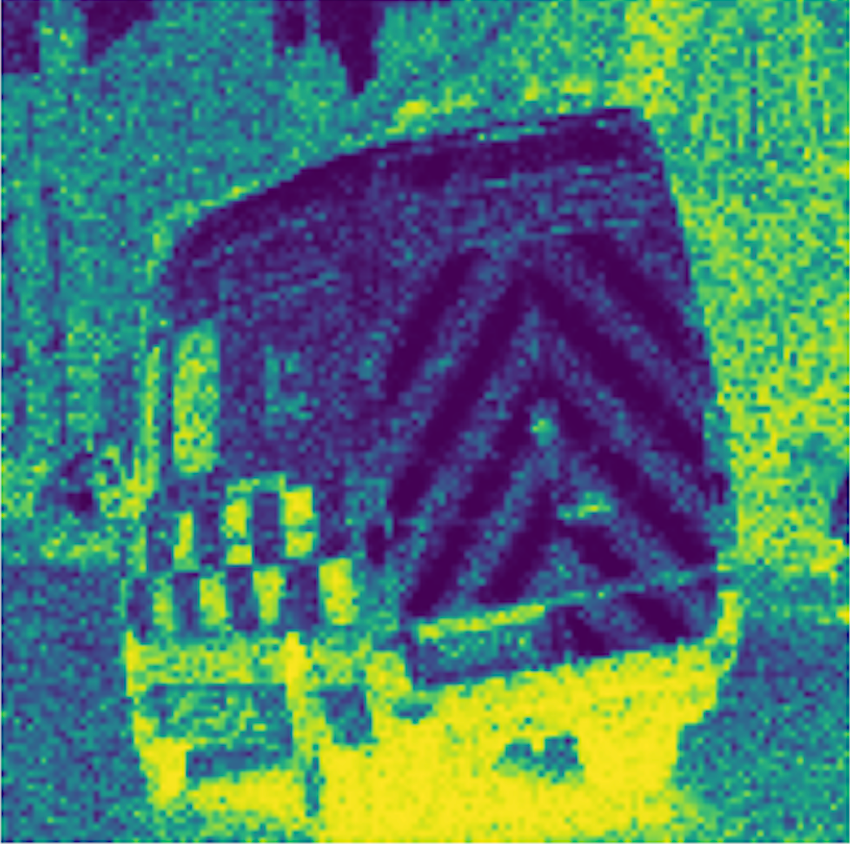}};
\end{tikzpicture} 
\begin{tikzpicture}[zoomboxarray]
  \node[image node]{\includegraphics[width=\subplotwidth,height=\subplotwidth]{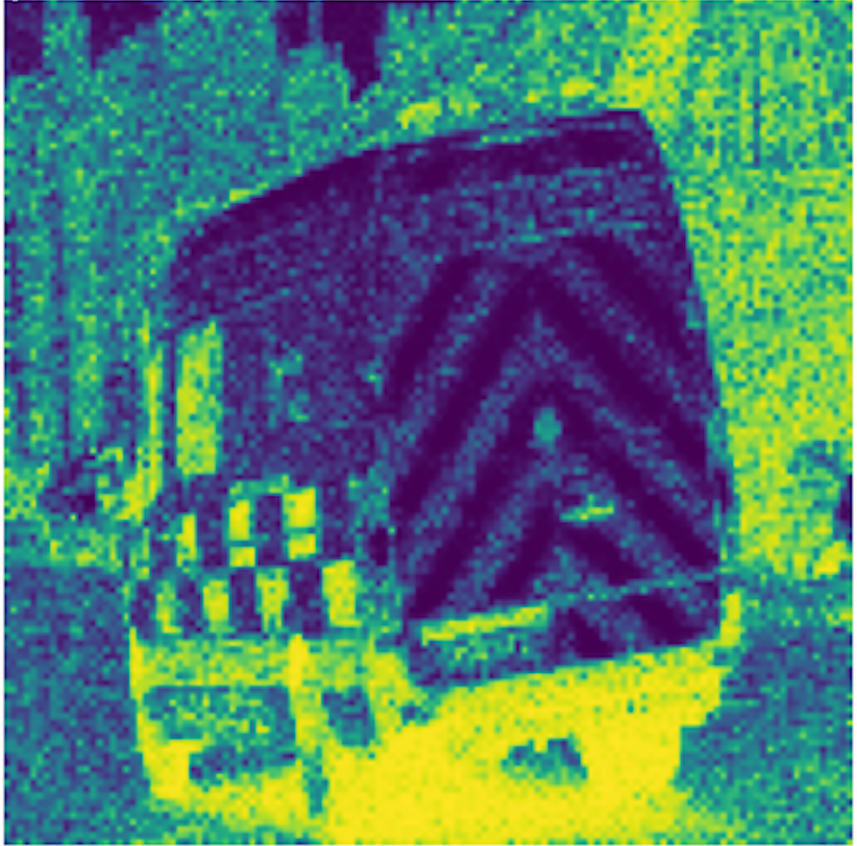}};
\end{tikzpicture}
\begin{tikzpicture}[zoomboxarray]
  \node[image node]{\includegraphics[width=\subplotwidth,height=\subplotwidth]{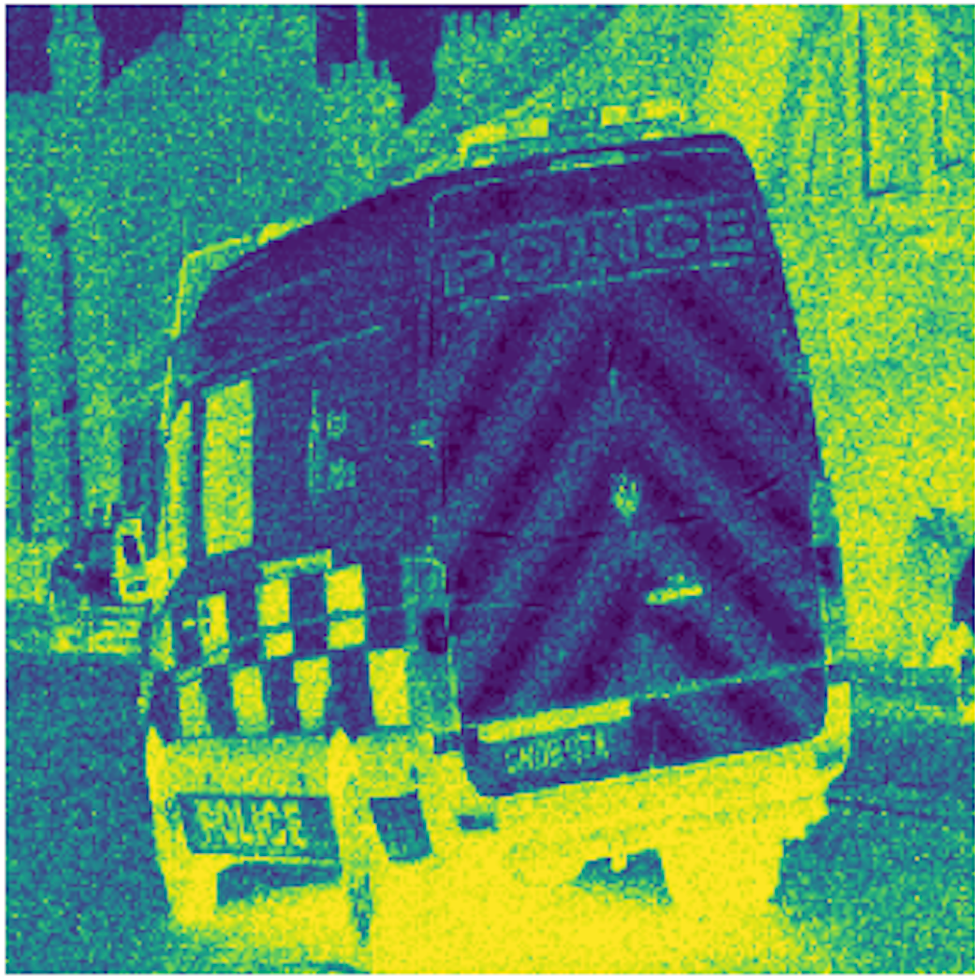}};
\end{tikzpicture}

\caption{Deblurring experiment ($\alpha=10$): standard deviations (top) and coverage probability maps (bottom).}
\label{fig:uncertainty_coverage_map}
\end{figure}

\subsection{Positron Emission Tomographic Reconstruction}
We evaluate the performance of the proposed method when performing a PET reconstruction task using a synthetic phantom of size $128 \times 128$. The forward operator $\bH$ simulates a parallel-beam acquisition model with 512 projection angles uniformly distributed in $[0, \pi)$. 
The resulting system matrix $\bH \in \mathbb{R}^{93184 \times 16384}$ is constructed using the ASTRA toolbox \cite{van2016fast, van2015astra} and defines a severely ill-posed inverse problem due to the sparsity of the measurements. This configuration emulates a realistic tomographic setting and poses a significant computational challenge due to the high dimensionality and ill-posedness of the problem.

Quantitative results are summarized in Table~\ref{tab:pet}, and representative reconstructions are displayed in Figure~\ref{fig:pet}. 
The proposed RED-HRLwSGS attains the highest PSNR (30.52dB), indicating superior reconstruction fidelity in terms of PSNR. Both TV-PIDAL and RED-RPnP-ULA  reach PSNRs of 27.16dB; TV-PIDAL is associated with a high SSIM (0.856) while RED-RPnP-ULA achieves the best structural and perceptual metrics (SSIM of 0.866 and LPIPS of 0.112). For instance, TV-PIDAL exhibits a relatively high LPIPS (0.190) despite its competitive PSNR.

The Bayesian baseline TV-SPA performs poorly across all metrics (PSNR of 22.46dB, SSIM of 0.511, LPIPS of 0.206). BSD-RPnP-ULA shows intermediate performance (23.13dB of PSNR, SSIM of 0.786, LPIPS of 0.163) and the BSD-HRLwSGS variant attains a PSNR of 24.98dB with an SSIM of 0.819 and an LPIPS of 0.163, suggesting limited benefit from the BSD prior. RED-HRLwSGS provides the most favorable trade-off between fidelity and perceptual realism (30.52dB of PSNR, SSIM of 0.809, LPIPS of 0.143). Overall, HRLwSGS variants offer a combination of high reconstruction fidelity, acceptable perceptual scores and the advantage of enabling uncertainty quantification.

Figure~\ref{fig:pet} provides a qualitative comparison of the reconstructed images (first row), the pixelwise posterior standard deviations (second row) and the coverage probability maps (third row). These visualizations highlight both reconstruction fidelity and the reliability of the uncertainty quantification across the compared samplers. For the proposed RED-HRLwSGS, the reconstructed image is sharp and visually similar to the ground truth, with well-preserved edges and smooth homogeneous regions.
The associated uncertainty map shows localized uncertainty mainly along structural contours, while homogeneous regions remain stable with very low variance.
Its coverage map is spatially consistent and concentrated near one in most areas. The BSD-HRLwSGS reconstruction exhibits slightly higher noise levels and minor artefacts near the edges compared to the RED-HRLwSGS reconstruction, as reflected by the larger posterior standard deviations. Nevertheless, the uncertainty remains structured and well localized. The coverage map corroborates these observations, showing moderately reduced confidence around the object boundaries but overall satisfactory reliability.

\begin{table}[!htp]
    \centering  
    \setlength{\tabcolsep}{6pt}
    \caption{PET reconstruction experiment: performance.}
    \label{tab:pet}
    \begin{tabular}{lcccc}
    \toprule
    Method & PSNR (dB) & SSIM & LPIPS \\
    \midrule
    TV-PIDAL & \underline{27.16} & \underline{0.856} & 0.190 \\
    TV-SPA & 22.46 & 0.511 & 0.206  \\
    RED-RPnP-ULA & \underline{27.16} & \textbf{0.866} & \textbf{0.112} \\
    BSD-RPnP-ULA & 23.13 & 0.786 & 0.163  \\
    RED-HRLwSGS  & \textbf{30.5}2 & 0.809 & \underline{0.143} \\
    BSD-HRLwSGS  & 24.98 & 0.819 & 0.163 \\
    \bottomrule
    \end{tabular}
\end{table}

\renewcommand{\subplotwidth}{.18\textwidth}
\begin{figure}
\centering
\begin{tabularx}{0.99\textwidth} { 
   >{\centering\arraybackslash}X 
   >{\centering\arraybackslash}X 
   >{\centering\arraybackslash}X 
   >{\centering\arraybackslash}X 
   >{\centering\arraybackslash}X 
   }
    \tiny{Ground Truth} & \tiny{RED-HRLwSGS }  & \tiny{BSD-HRLwSGS} &  \tiny{RED-RPnP-ULA } & \tiny{BSD-RPnP-ULA} 
\end{tabularx}
\begin{tikzpicture}[zoomboxarray]
\node [image node] { \includegraphics[width=\subplotwidth, height=\subplotwidth]{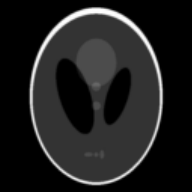} };
   \zoombox{0.5, 0.5}
\end{tikzpicture}
\begin{tikzpicture}[zoomboxarray]
    \node [image node] { \includegraphics[width=\subplotwidth, height=\subplotwidth]{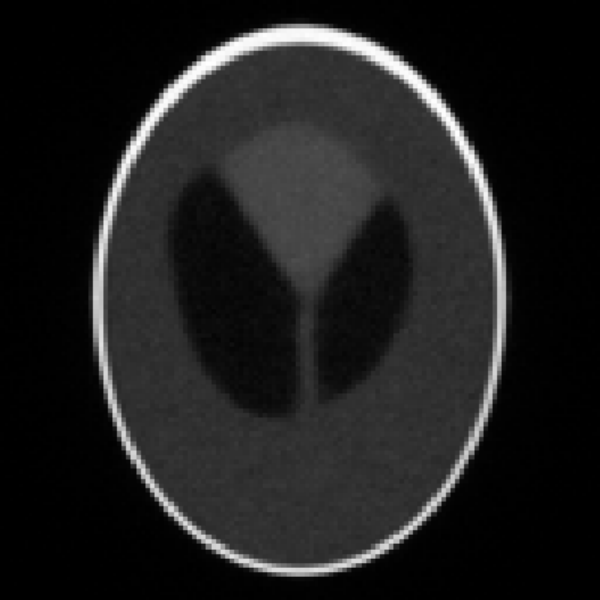} };
   \zoombox{0.5, 0.5}
\end{tikzpicture}
\begin{tikzpicture}[zoomboxarray]
    \node [image node] { \includegraphics[width=\subplotwidth, height=\subplotwidth]{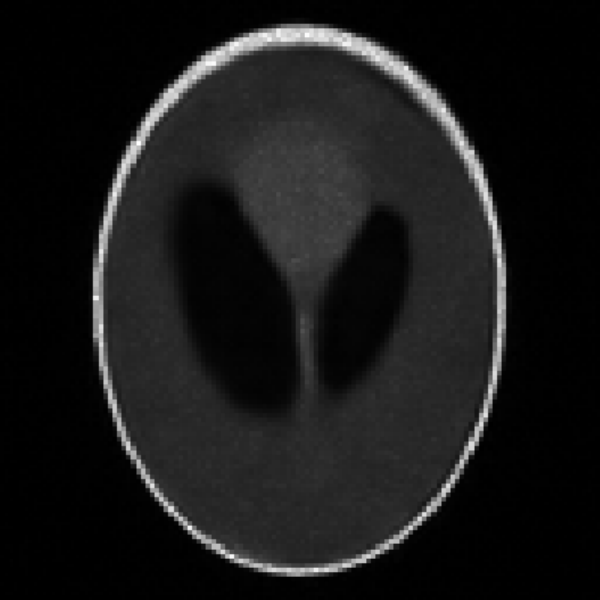} };
   \zoombox{0.5, 0.5}
\end{tikzpicture}
\begin{tikzpicture}[zoomboxarray]
    \node [image node] { \includegraphics[width=\subplotwidth, height=\subplotwidth]{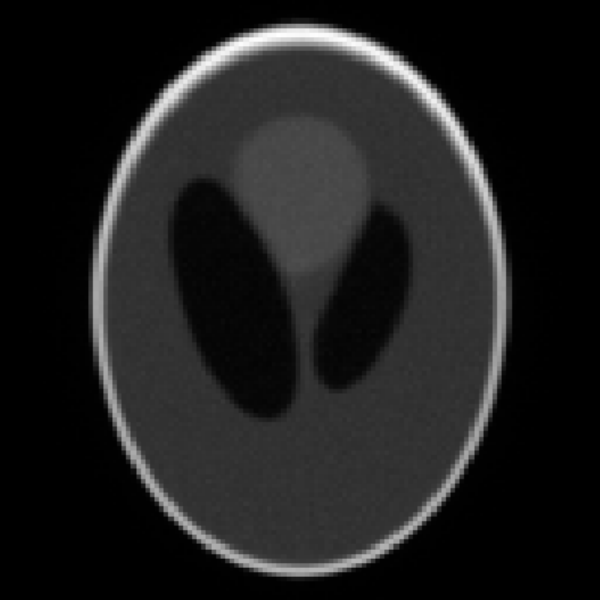} };
   \zoombox{0.5, 0.5}
\end{tikzpicture}%
\begin{tikzpicture}[zoomboxarray]
    \node [image node] { \includegraphics[width=\subplotwidth, height=\subplotwidth]{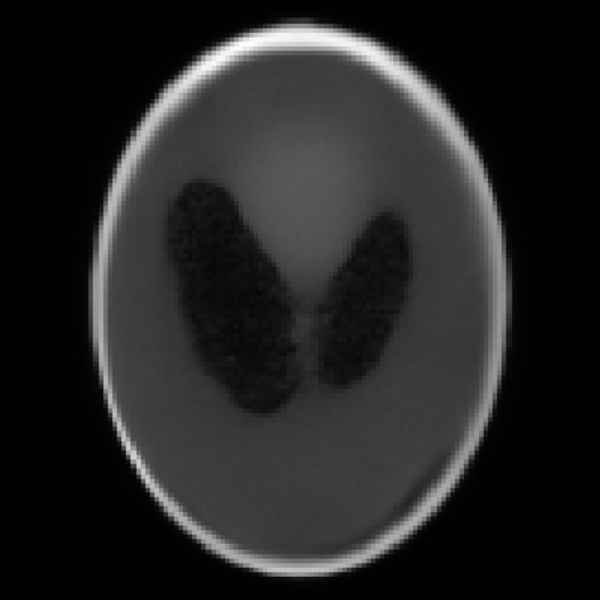} };
   \zoombox{0.5, 0.5}
\end{tikzpicture}%
\\
\begin{tikzpicture}[zoomboxarray]
    \node [image node] { \includegraphics[width=\subplotwidth, height=\subplotwidth]{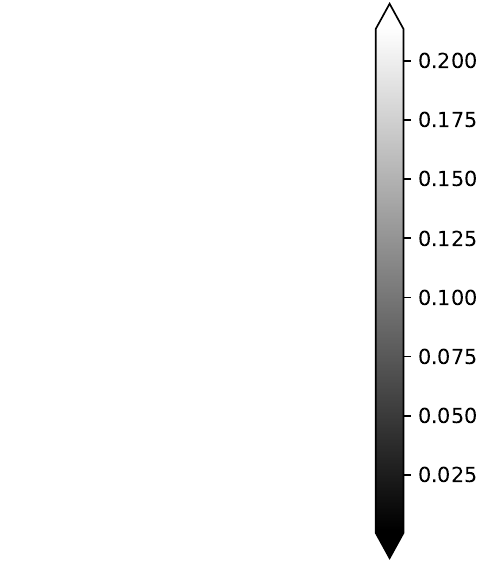} };
\end{tikzpicture}
\begin{tikzpicture}[zoomboxarray]
    \node [image node] { \includegraphics[width=\subplotwidth, height=\subplotwidth]{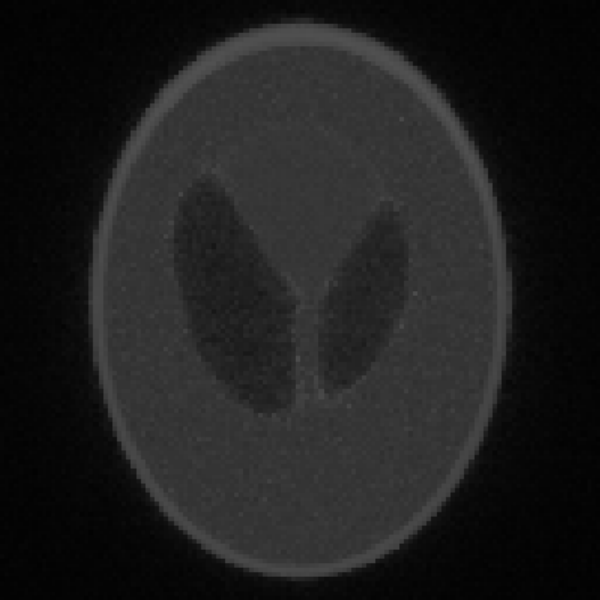} };
   \zoombox{0.5, 0.5}
\end{tikzpicture}
\begin{tikzpicture}[zoomboxarray]
    \node [image node] { \includegraphics[width=\subplotwidth, height=\subplotwidth]{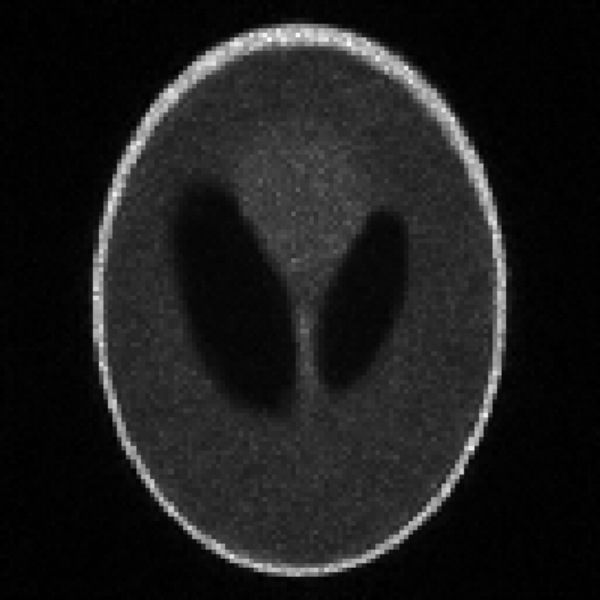} };
   \zoombox{0.5, 0.5}
\end{tikzpicture}
\begin{tikzpicture}[zoomboxarray]
    \node [image node] { \includegraphics[width=\subplotwidth, height=\subplotwidth]{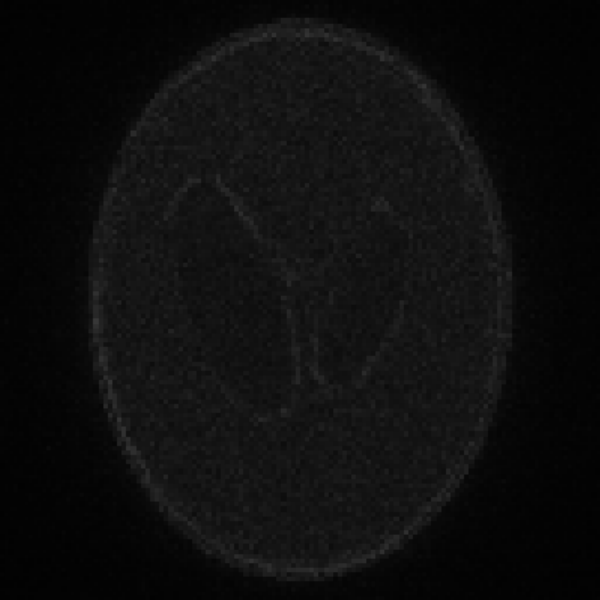} };
   \zoombox{0.5, 0.5}
\end{tikzpicture}%
\begin{tikzpicture}[zoomboxarray]
    \node [image node] { \includegraphics[width=\subplotwidth, height=\subplotwidth]{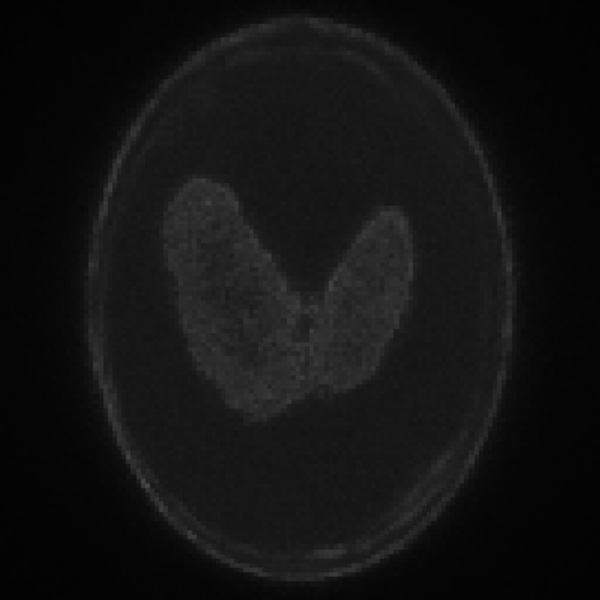} };
   \zoombox{0.5, 0.5}
\end{tikzpicture}%
\\
\begin{tikzpicture}[zoomboxarray]
    \node [image node] { \includegraphics[width=\subplotwidth, height=\subplotwidth]{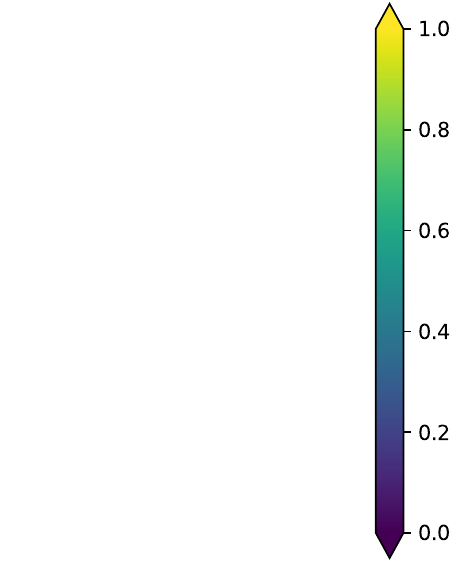} };
\end{tikzpicture}
\begin{tikzpicture}[zoomboxarray]
    \node [image node] { \includegraphics[width=\subplotwidth, height=\subplotwidth]{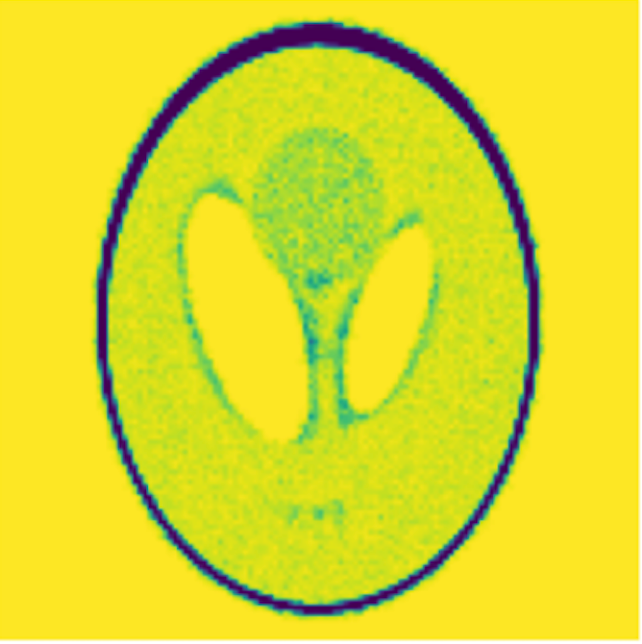} };
   \zoombox{0.5, 0.5}
\end{tikzpicture}
\begin{tikzpicture}[zoomboxarray]
    \node [image node] { \includegraphics[width=\subplotwidth, height=\subplotwidth]{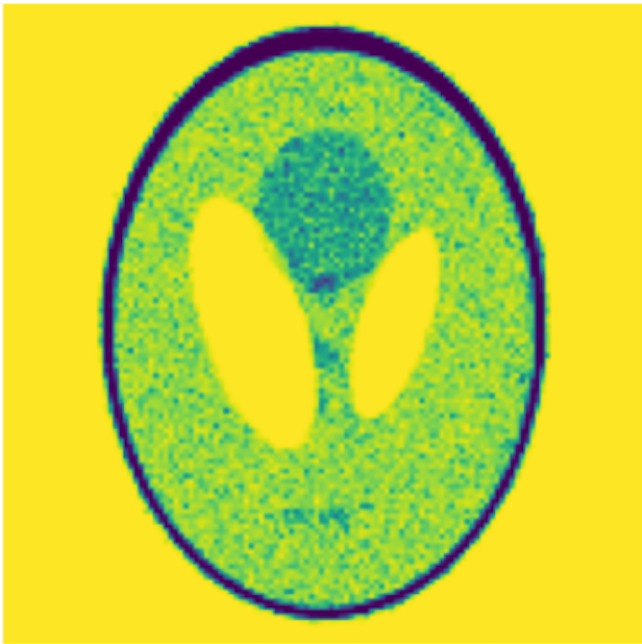} };
   \zoombox{0.5, 0.5}
\end{tikzpicture}
\begin{tikzpicture}[zoomboxarray]
    \node [image node] { \includegraphics[width=\subplotwidth, height=\subplotwidth]{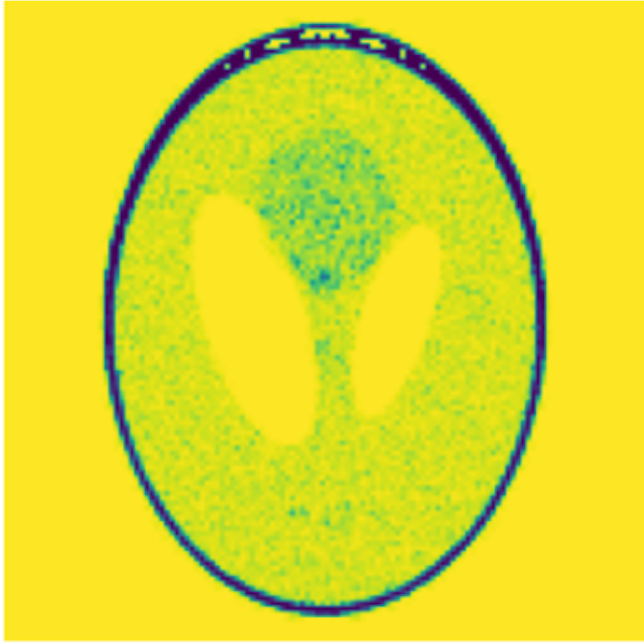} };
   \zoombox{0.5, 0.5}
\end{tikzpicture}%
\begin{tikzpicture}[zoomboxarray]
    \node [image node] { \includegraphics[width=\subplotwidth, height=\subplotwidth]{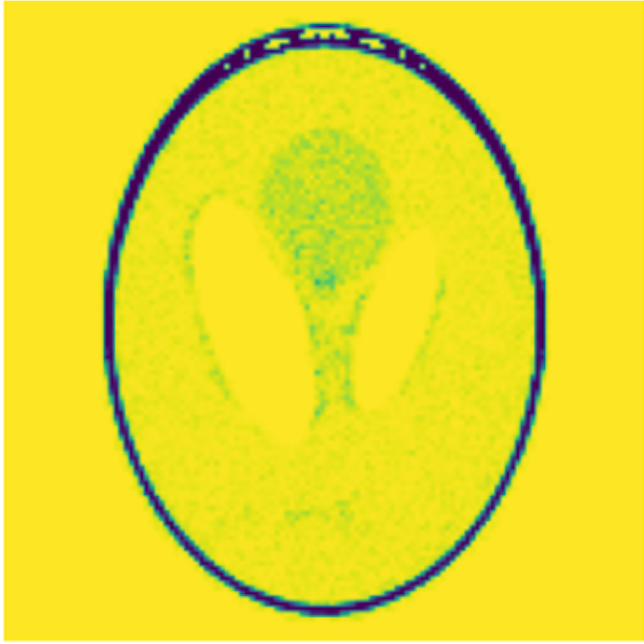} };
   \zoombox{0.5, 0.5}
\end{tikzpicture}%

\caption{PET reconstruction experiment: reconstructed images (top), standard deviations (middle), and coverage probability maps (bottom).}
\label{fig:pet}
\end{figure}

\renewcommand{\subplotwidth}{.4\textwidth}
\newcommand{\subplotheight}{.3\textwidth}

\begin{figure}[!h]
\centering
    \includegraphics[width=\subplotwidth,height=\subplotheight]{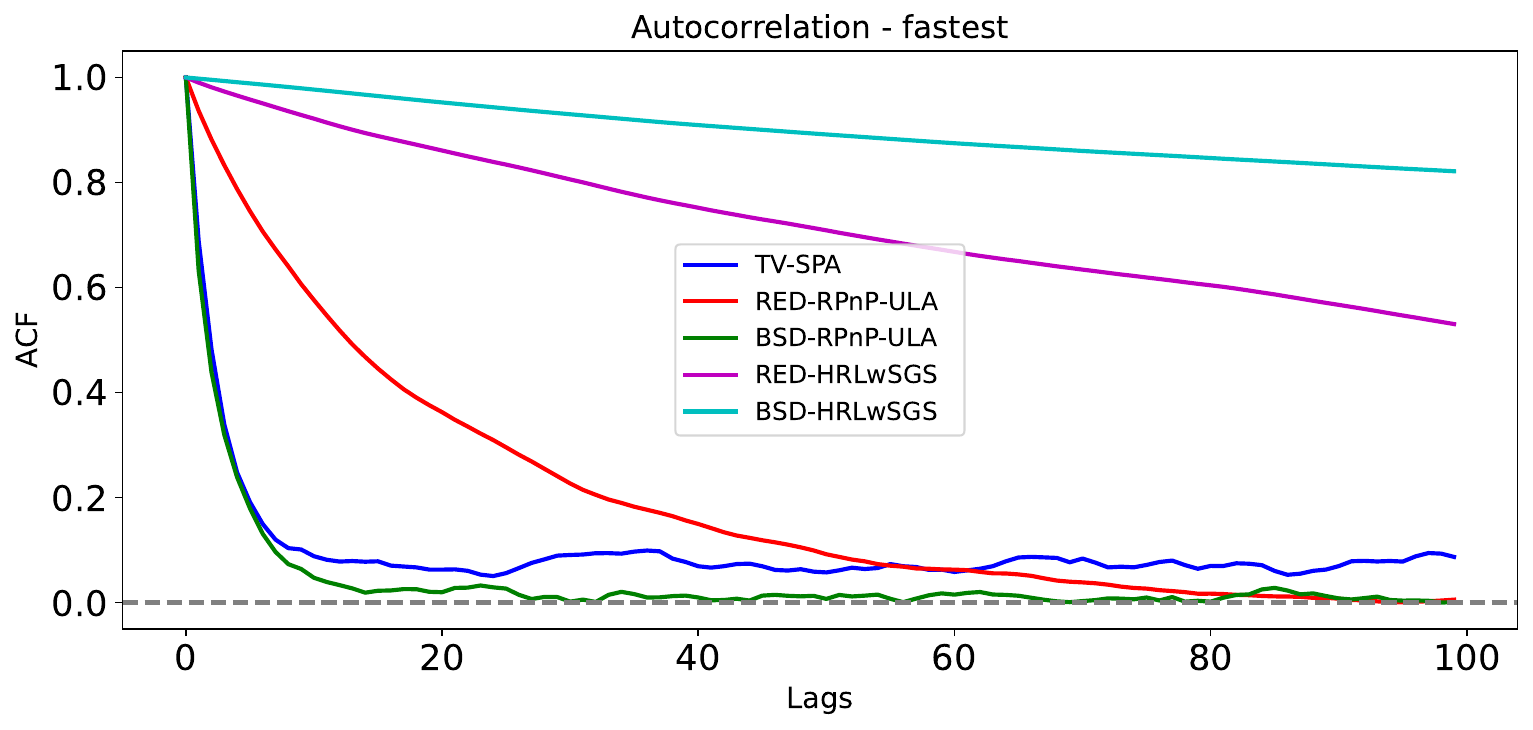}%
    \includegraphics[width=\subplotwidth,height=\subplotheight]{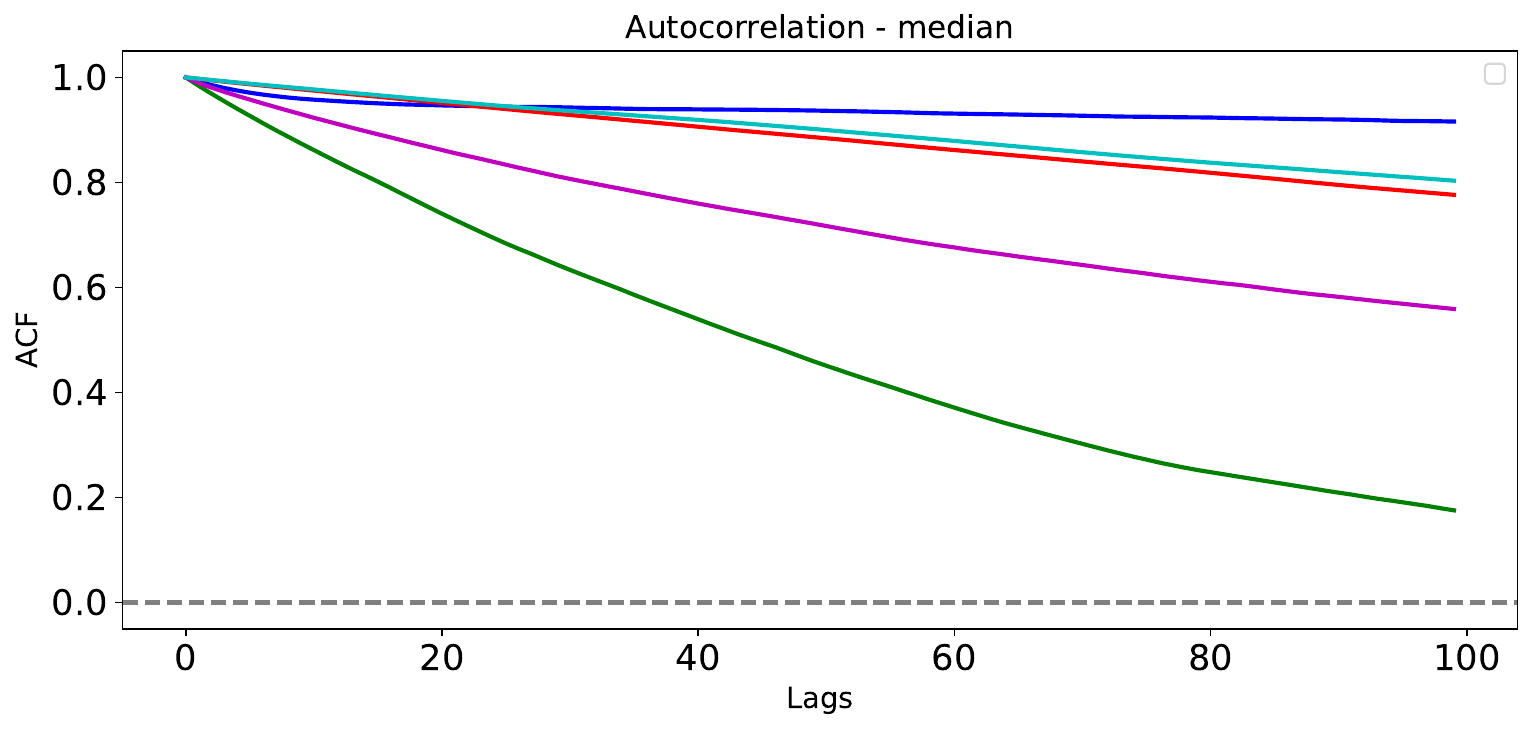}\\
    \includegraphics[width=\subplotwidth, height=\subplotheight]{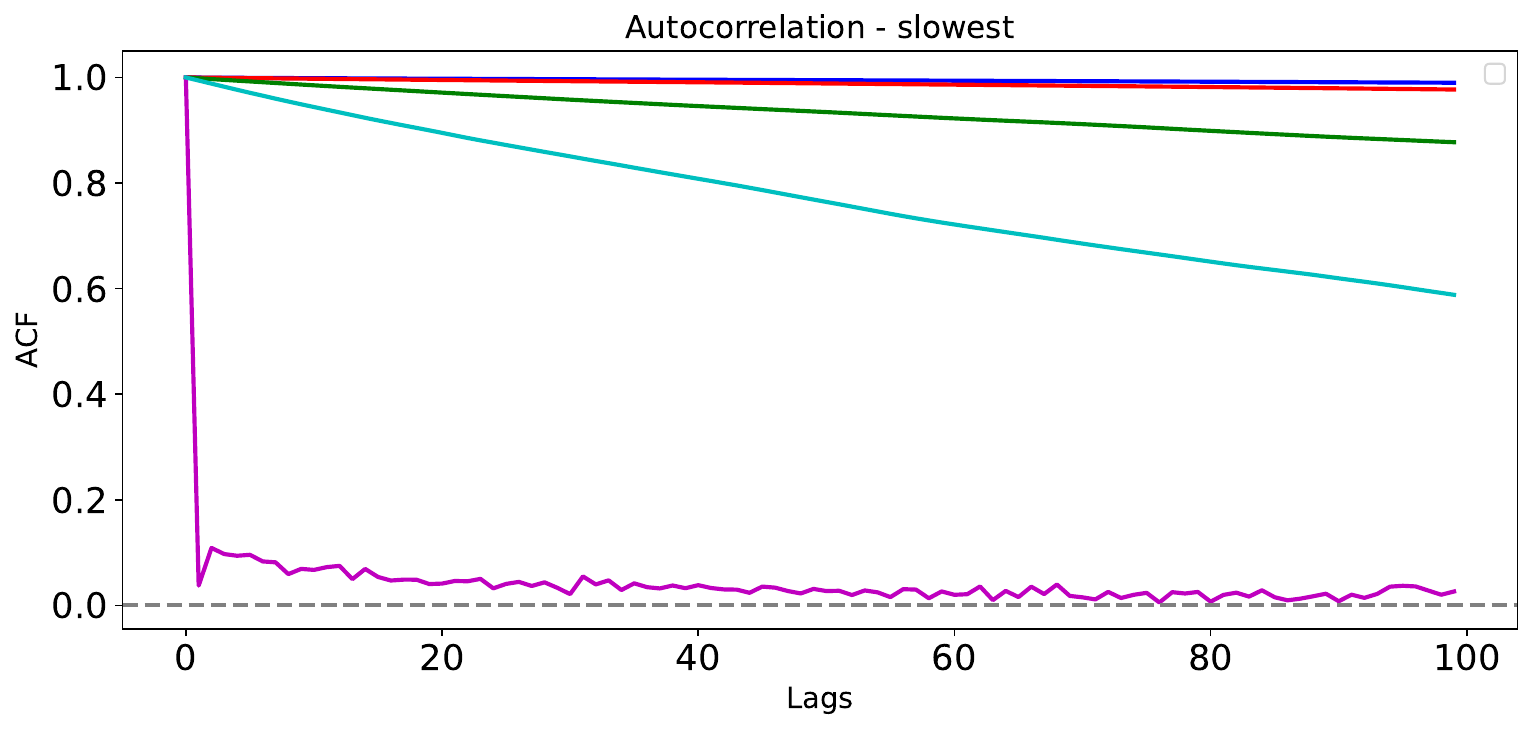}%
    \includegraphics[width=\subplotwidth, height=\subplotheight]{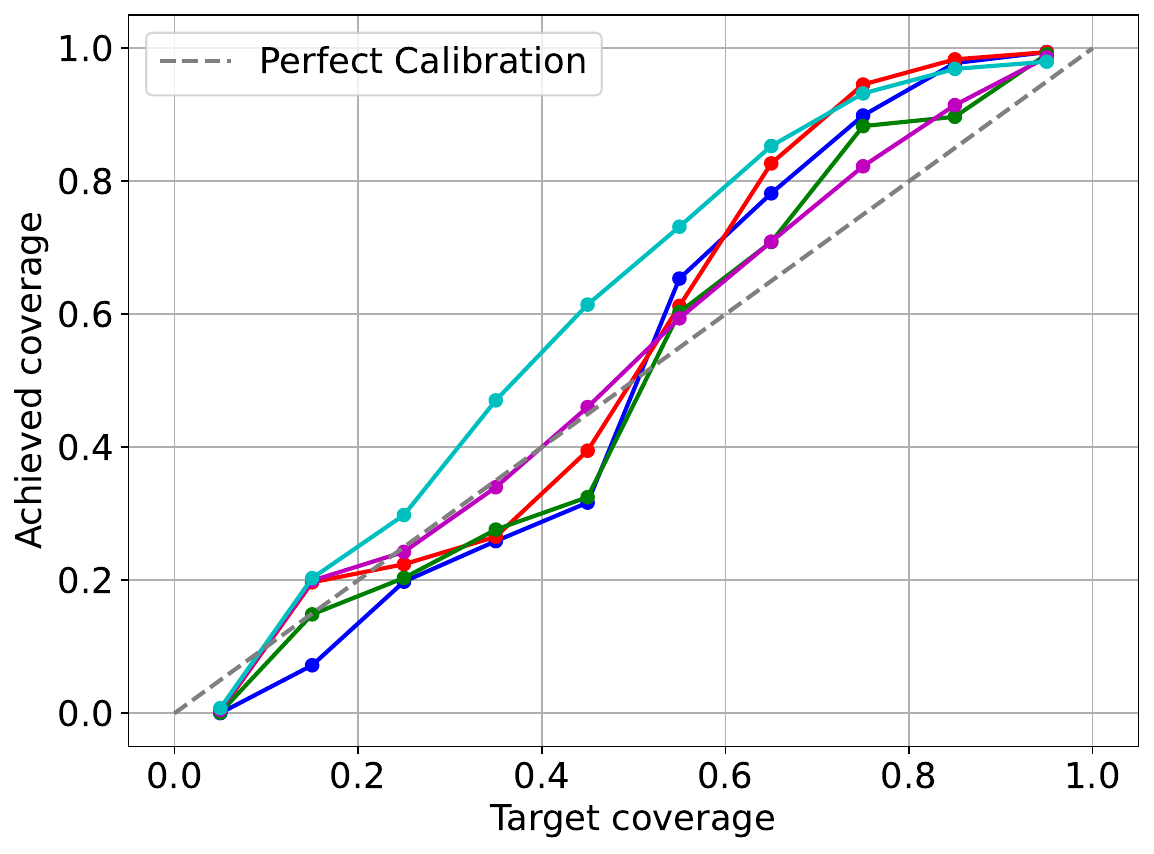}
\caption{PET reconstruction experiment: autocorrelation and calibration curves.}
\label{fig:tep_afc}
\end{figure}

Figure \ref{fig:tep_afc} shows the performance of the sampling-based algorithms through autocorrelation functions (ACF) and calibration results. The ACF is a key indicator of MCMC sampling efficiency: a rapid decay towards zero indicates fast-mixing chains, low correlation between successive samples, and effective exploration of the posterior distribution. In the fastest scenario, TV-SPA and BSD-RPnP-ULA methods exhibit an almost instantaneous drop in the ACF, indicating weak correlation and superior mixing. By contrast,  HRLwSGS-based approaches (namely BSD-HRLwSGS and RED-HRLwSGS) show slower decay, suggesting stronger temporal dependencies. In the median case, this tendency partially reverses. BSD-RPnP-ULA still achieves the fastest ACF decay, closely followed by RED-HRLwSGS, whose ACF  steadily decreases after a moderate number of iterations. The remaining methods struggle to reach decorrelation, revealing slower convergence and poorer mixing behavior. Finally, in the slowest case, the difference becomes particularly striking. RED-HRLwSGS succeeds in maintaining a fast and stable ACF decay, demonstrating efficient posterior distribution exploration, whereas all other algorithms exhibit persistent correlations, which are symptomatic of limited sampling efficiency.

Calibration measures how accurately predicted uncertainties reflect true variability. This is evaluated by comparing a target coverage $c$ with the achieved coverage, i.e.,  the fraction of pixels in which the true value lies within the posterior interval at the $c$-level. A perfectly calibrated model will follow the diagonal line, curves below this line indicate overconfidence, while curves above it indicate underconfidence. In medical imaging, slight conservatism is favored over overconfidence. The results show that  RED-HRLwSGS achieves near-ideal calibration across the full range, providing reliable uncertainty estimates. BSD-RPnP-ULA and RED-RPnP-ULA show alternating under- and over-coverage, and BSD-HRLwSGS slightly overestimates uncertainty. These results highlight RED-HRLwSGS as the most robust method for efficient sampling and trustworthy uncertainty quantification in PET imaging.


\section{Conclusion}\label{sec:conclusion} 
This paper introduced a novel Bayesian model for addressing Poisson inverse problems,  combining exact and geometry-aware data augmentations. The model preserved the interpretability of the Poisson likelihood while ensuring favorable conjugacy properties and strict positivity via a Bregman divergence. The resulting augmented posterior exhibited a tractable conditional structure, enabling a sampling algorithm with three fully explicit updates and one geometry-aware Langevin step.
Specifically, efficient sampling was  achieved using a split Gibbs sampler that incorporated a Hessian Riemannian Langevin Monte Carlo (HRLMC) step, which was versatile enough to handle a wide range of priors, particularly those defined by an explicit or easily computable score function.

 Experimental results on denoising, deblurring, and positron emission tomography (PET) reconstruction tasks demonstrated that the method achieved reconstruction quality competitive with or superior to  state-of-the-art optimization-based and Bayesian sampling approaches. Moreover, it provided valuable and reliable posterior uncertainty quantification, which was crucial in high-stakes applications like medical imaging.

\appendix

\section{Proof of Theorem \ref{theorem1}}\label{app:proof}
	From the joint formulation, we marginalize out $\bn$:
	\begin{equation*}
		\sum_{\bn \in \mathbb{N}^{m \times n}} p(\by  | \bn) \, p(\bn  | \bx) 
		= \sum_{\bn \in \mathbb{N}^{m \times n}} \exp \left\{ - f(\bx, \bn; \by) \right\}.
	\end{equation*}
	Due to the indicator function, the sum restricts to $\bn$ satisfying $\sum_j n_{ij} = y_i$:
	\begin{align*}
		\sum_{\bn \in \mathbb{N}^{m \times n}} \exp \left\{ - f(\bx, \bn; \by) \right\}  &=  \exp \left\{ -\sum_{i=1}^m  \sum_{j=1}^n  \alpha h_{ij} x_j  \right\} \cdot \sum_{\bn \in C_\by } \prod_{i=1}^m \prod_{j=1}^n \frac{(\alpha h_{ij} x_j)^{n_{ij}}}{n_{ij}!}  \\
		&= \exp \left\{ - \sum_{i=1}^m  \alpha \bh_{i}^\top \bx  \right\} \cdot  \prod_{i=1}^m  \sum_{\bn \in C_\by} \prod_{j=1}^n \frac{(\alpha h_{ij} x_j)^{n_{ij}}}{n_{ij}!} 
	\end{align*}
	This inner sum is the multinomial coefficient corresponding to a partition of a Poisson variable of total intensity $\sum_j \alpha h_{ij} x_j$:
	\begin{equation*}
		\sum_{\bn \in C_\by} \prod_{j=1}^n \frac{(\alpha h_{ij} x_j)^{n_{ij}}}{n_{ij}!}
		= \frac{1}{y_i!} \left( \sum_{j=1}^n \alpha h_{ij} x_j \right)^{y_i} =  \frac{1}{y_i!} \left(  \alpha \bh_{i}^\top \bx \right)^{y_i}
	\end{equation*}
	Thus,
	\begin{align*}
		 \sum_{\bn \in \mathbb{N}^{m \times n}} \exp \left\{ - f(\bx, \bn; \by) \right\} &=   \exp \left\{ -\sum_{i=1}^m  \left[ \alpha \bh_i^\top \bx - y_i \log(\alpha \bh_i^\top \bx) +  \log(y_i!)   \right]  \right\},
	\end{align*}
which is the original Poisson likelihood $p(\by | \bx)$. This completes the proof.

\section{Denoiser-based priors}\label{sec:denoisers}
This section reviews two classes of regularization which leverage denoisers, namely regularization-by-denoising (RED) and Bregman score denoisers (BSD). These regularizations  have been considered in this paper to derive particular instances of the proposed HRLwSGS algorithm, namely RED-HRLwSGS and BSD-HRLWsGS.

\subsection{Regularization-by-denoising (RED)}\label{subsec:RED}
The RED framework defines the regularization term $g(\cdot)$ using a denoising operator $\textsf{D}_\nu: \R^n \to \R^n$ \cite{romano2017little},
\begin{equation}\label{eq:red-regularization}
    g(\bx) = \frac{1}{2} \bx^\top (\bx - \textsf{D}_\nu(\bx))
\end{equation}
where $\nu$ controls the denoising strength. This formulation explicitly integrates the denoising operator into the regularization term, allowing the use of advanced denoisers when solving inverse problems. When the denoiser $\textsf{D}_\nu$ is locally homogeous and has a symmetric Jacobian \cite{reehorst2018regularization}, the gradient of the RED regularizer (i.e. the prior score function in a Bayesian context) has a simple form 
\begin{equation*}
     \nabla g(\bx) = \bx - \textsf{D}_\nu(\bx).
\end{equation*}
A probabilistic formulation of RED has been recently proposed \cite{Faye2024}, which enables the use of a RED potential to define a corresponding prior distribution within a Bayesian framework.

\subsection{Bregman score denoiser (BSD)}\label{sec:bregman_score_denoiser}
The Bregman score denoiser is a denoising operator designed to operate under a Bregman geometry  \cite{hurault2023convergent} . It generalizes the Gaussian gradient-step denoiser \cite{hurault2021gradient} to handle noise models associated with Bregman divergences. Formally, the denoiser $\mathcal{B}_\nu \colon \mathbb{R}^n \to \mathbb{R}^n$ is defined by
\begin{equation}\label{eq:bregman_score_denoiser}
	\mathcal{B}_\nu(\by) = \by - \left( \nabla^2 h(\by) \right)^{-1} \cdot \nabla g_\nu(\by)
\end{equation}
where $h \colon \mathbb{R}^n \to \mathbb{R}$ is a strictly convex and $\mathcal{C}^2$ function defining the Bregman geometry, and $g_\nu \colon \mathbb{R}^n \to \mathbb{R}$ is a (possibly nonconvex) potential function. Following the formulation of Hurault \textit{et al.}~\cite{hurault2023convergent}, the noisy observations are assumed to be corrupted by a Bregman noise, resulting in the likelihood function
\begin{equation}\label{eq:model_noise}
	p(\by | \bx) \propto \exp\left\{ - \gamma d_h(\bx, \by) \right\}
\end{equation}
where $d_h$ denotes the Bregman divergence \eqref{eq:bregman_divergence} generated by $h$. This framework encompasses classical Gaussian models as a special case, and naturally extends to non-Euclidean settings. As considered in \cite{hurault2023convergent},  when $h$ is the Burg’s entropy and $\gamma > 1$, the distribution $p(\by | \bx)$ writes as a product of inverse Gamma distributions, i.e,
\begin{equation}\label{eq:inverse_gamma}
	y_j | x_j \sim   \mathcal{IG}(\gamma - 1, \gamma x_j), \quad j=1, \dots, n.
\end{equation}
Importantly, under mild regularity assumptions, $\mathcal{B}_\nu$ also admits a variational interpretation as a Bregman proximal operator. Specifically, for any $\by \in \operatorname{int} \operatorname{dom}(h)$, we have
\begin{equation}
	\mathcal{B}_\nu(\by) \in \arg\min_{\bx \in \mathbb{R}^n} \left\{ d_h(\bx, \by) + \psi_\nu(\bx) \right\}
\end{equation}
where $ \psi_\nu$ is an explicit function derived from $g_\nu$. The Bregman score denoiser can then be considered as a natural extension of score-based denoising to inverse problems governed by non-Euclidean geometries, such as those arising under Poisson noise. It provides a flexible and theoretically grounded framework for both variational and sampling-based image restoration methods.

In practice, Hurault \emph{et al.} propose to define the regularization function $g_\nu$ as the following quadratic potential \cite{hurault2023convergent, hurault2021gradient}
\begin{equation}
    g_\nu(\bx) = \frac{1}{2} \| \bx - \textsf{N}_\nu(\bx) \|^2
\end{equation}
where $\textsf{N}_\nu \colon \mathbb{R}^n \to \mathbb{R}^n$ denotes a deep convolutional neural network, instantiated in practice with the DRUNet architecture~\cite{zhang2021plug}. The gradient $\nabla g_\nu$ of the regularization potential can be computed via automatic differentiation, enabling the seamless integration of complex neural architectures into the inversion pipeline.

\section{Properties of the HRLMC and related assumptions}
\label{HRLMC_assumptions}
This section discusses the assumptions required to ensure the convergence of the HRLMC algorithm.  Both works \cite{zhang2020wasserstein}  and \cite{li2022mirror} investigate the convergence properties of Langevin algorithms defined in non-Euclidean geometries, where updates are performed using mirror maps derived from strictly convex functions. These algorithms are particularly well-suited for sampling over constrained domains such as the positive orthant $\mathbb{R}_{+}^{n}$, as in the case of Poisson inverse problems. Zhang \emph{et al.} introduce this discretization scheme of the Riemannian Langevin diffusion and analyze its convergence in Wasserstein distance \cite{zhang2020wasserstein}. Under mild assumptions, a non-asymptotic upper-bound on the sampling error can be derived and HRLMC exhibits convergence guarantees analogous to those of LMC algorithms operating under Euclidean geometry. More recently, Li \emph{et al.} \cite{li2022mirror} conducted an improved analysis showing that the bias of the discretized sampler vanishes as the step size approaches zero. This analysis requires only a subset of the assumptions initially stated in \cite{zhang2020wasserstein}, i.e., 
\begin{itemize}
    \item[(A1)] Modified self concordance \label{item:A1}
    \begin{equation*}
\left\| \nabla^2\phi(\bz)^{1/2} - \nabla^2\phi(\bz')^{1/2} \right\|_{HS} \leq \sqrt{\alpha} \left\| \nabla \phi(\bz) - \nabla \phi(\bz') \right\|_2
\end{equation*}
   \item[(A2)] Relative strong convexity \label{item:A2}
\begin{equation*}
 m \|\nabla \phi(\bz) - \nabla \phi(\bz')\|_2^2 \leq \langle \nabla U(\bz) - \nabla U(\bz'), \nabla \phi(\bz) - \nabla \phi(\bz') \rangle
\end{equation*}
 \item[(A3)] Relative Lipschitz-smoothness
 \begin{equation*}
\|\nabla U(\bz) - \nabla U(\bz')\|_2 \leq M \|\nabla \phi(\bz) - \nabla \phi(\bz')\|_2.
\end{equation*}
\end{itemize}

In Section \ref{sec:proposed_algorithm}, the HRLMC algorithm is implemented using the mirror map $\phi(\cdot)$ defined in \eqref{eq:burg} to target the distribution whose  potential function $U(\cdot)$ is defined in \eqref{eq:potential}. In what follows, we show that these quantities satisfy the conditions (A1)--(A3) when $g(\cdot)$ is specifically chosen as the RED potential \eqref{eq:red-regularization}. To do so, we further assume that the denoiser $\mathsf{D}_\nu$ is $L_\mathsf{D}$-Lipschitz and the variables $z_j$ are bounded, i.e., it exists $\epsilon_z >0$ and $C_z < \infty$ such that $ \epsilon_z \leq z_j \leq C_z$.

\paragraph{(A1) Modified self-concordance}
The choice of the mirror map leads to
\begin{equation*}
\nabla^2\phi(\bz)^{1/2} - \nabla^2\phi(\bz')^{1/2} = \mathrm{diag}\left( \frac{1}{z_1} - \frac{1}{z'_1}, \ldots, \frac{1}{z_n} - \frac{1}{z'_n} \right),
\end{equation*}
and thus
\begin{equation*}
\left\| \nabla^2\phi(\bz)^{1/2} - \nabla^2\phi(\bz')^{1/2} \right\|_{HS}  = \sqrt{ \sum_{j=1}^n \left( \frac{1}{z_j} - \frac{1}{z'_j} \right)^2 }.
\end{equation*}
Similarly
\begin{equation*}
\left\| \nabla \phi(\bz) - \nabla \phi(\bz') \right\|_2 = \sqrt{ \sum_{j=1}^n \left( \frac{1}{z'_j} - \frac{1}{z_j} \right)^2 },
\end{equation*}
which shows that Assumption (A1) is verified with $\alpha=1$.

\paragraph{(A2) Relative strong convexity}
The gradient of the potential writes
\begin{equation*}
\nabla U(\bz) = \nabla g(\bz) + \left(\frac{1}{\rho} - 1\right) \nabla \phi(\bz) + \frac{1}{\rho \bx}.
\end{equation*}
When $g(\cdot)$ is chosen as the RED potential, its gradient is given by $\nabla g(\bz) = \bz - \mathsf{D}_\nu(\bz)$ and the inner product in (A2) expands as
\begin{align*}
\beta \langle \nabla g(\bz) - \nabla g(\bz'), \nabla \phi(\bz) - \nabla \phi(\bz') \rangle + \left(\frac{1}{\rho} - 1\right) \|\nabla \phi(\bz) - \nabla \phi(\bz')\|_2^2.
\end{align*}
The first term decomposes as
\begin{align*}
\langle \bz - \bz', \nabla \phi(\bz) - \nabla \phi(\bz') \rangle - \langle \mathsf{D}_\nu(\bz) - \mathsf{D}_\nu(\bz'), \nabla \phi(\bz) - \nabla \phi(\bz') \rangle.
\end{align*}
with
\begin{align*}
\langle \bz - \bz', \nabla \phi(\bz) - \nabla \phi(\bz') \rangle = \sum_{j=1}^n \frac{(z_j - z'_j)^2}{z_j z'_j} \geq \frac{1}{C_z^2} \|\bz - \bz'\|_2^2,
\end{align*}
and
\begin{align*}
|\langle \mathsf{D}_\nu(\bz) - \mathsf{D}_\nu(\bz'), \nabla \phi(\bz) - \nabla \phi(\bz') \rangle| 
\leq L_\mathsf{D} C_z^2 \|\nabla \phi(\bz) - \nabla \phi(\bz')\|_2^2.
\end{align*}
Thus Assumption (A2) is ensured with
\begin{equation*}
m = \frac{1}{\rho} - 1 + \beta\left( \frac{\epsilon_z^4}{C_z^2} - L_\mathsf{D} C_z^2 \right).
\end{equation*}
The sign of $m$ depends on the parameters $\rho$, $\beta$, $L_\mathsf{D}$, $\epsilon_z$, and $C_z$.  
Since $\epsilon_z < C_z$, we have $\frac{\epsilon_z^4}{C_z^2} < \epsilon_z^2 < C_z^2$, hence 
$\beta\left(\frac{\epsilon_z^4}{C_z^2} - L_\mathsf{D} C_z^2\right) \leq \beta C_z^2 (1 - L_\mathsf{D})$.  
This term is positive if $\frac{\epsilon_z^4}{C_z^4} \geq L_\mathsf{D}$, in which case $m \geq 0$ whenever $\rho \leq 1$.  
If $\frac{\epsilon_z^4}{C_z^4} \leq L_\mathsf{D}$, the term is negative and $m \geq 0$ requires 
\begin{equation*}
\rho \leq \frac{1}{1 + \beta \left(L_\mathsf{D} C_z^2 - \frac{\epsilon_z^4}{C_z^2}\right)}.
\end{equation*}

\paragraph{(A3) Relative Lipschitz-smoothness}
Using $\|\nabla g(\bz) - \nabla g(\bz')\|_2 \leq (1+L_\mathsf{D})\|\bz - \bz'\|_2$ and $\|\bz - \bz'\|_2 \leq C_z^2 \|\nabla \phi(\bz) - \nabla \phi(\bz')\|_2$, we obtain Assumption (A3) with
\begin{equation*}
M = \beta (1 + L_\mathsf{D}) C_z^2 + \left| \frac{1}{\rho} - 1 \right|.
\end{equation*}

\section{Experimental settings and parameter values}\label{app:parameters}
This appendix gathers the numerical values of the parameters associated with the  methods compared in the experiments. For each task and each noise level, we report the configurations adopted for the algorithms based on total variation regularization (TV-PIDAL, TV-SPA), as well as for the methods relying on deep denoisers (RPnP-ULA and HRLwSGS).\\

\noindent{\emph{Proposed HRLwSGS algorithm} -- } In the case of HRLwSGS, the update of the auxiliary variable $\bz_1$ is driven by the potential $U(\bz_1)$. The form of its gradient written in \eqref{eq:gradientU} highlights the balance between three contributions. The term $\beta \nabla g(\bz_1)$ encodes the action of the regularization, with $\beta$ weighting its relative importance. When $\beta$ is too small, its contribution becomes negligible compared to the other terms, since $1/\rho$ dominates for small $\rho$, effectively suppressing regularization. Conversely, a sufficiently large $\beta$ rebalances the dynamics, allowing the regularization to play its intended role, thereby improving both the mixing stability and the perceptual quality of the reconstructions. The parameter $\rho$ controls the degree of coupling between $\bx$, $\bz_1$, and $\bz_2$, that is, between the regularization and the likelihood subproblems. The chosen values reflect an empirical compromise, ensuring both adequate data fidelity and effective prior usage. Overall, the pair $(\beta,\rho)$ acts as a balance between data fidelity and the strength of regularization.

For the denoising strength parameter $\nu$, we adopt the strategy described in \cite{Faye2024}. The parameters used in the HRLwSGS algorithm depend on the type of regularization, the considered restoration task, and the noise level~$\alpha$. 
The parameters $(\rho, \beta, \delta, \nu)$ were empirically tuned for each configuration.  In the denoising task, we set $\rho=10^{-4}$, $\beta=4\times10^{3}$, $\delta=2\times10^{-7}$, and $\nu=(20,10)$ for $\alpha=40$, while for $\alpha=10$, we use $\rho=10^{-4}$, $\beta=4\times10^{3}$, $\delta=3\times10^{-7}$, and $\nu=(25,20)$.  For deblurring, the chosen parameters are $\rho=10^{-4}$, $\beta=3\times10^{3}$, $\delta=2.5\times10^{-7}$, and $\nu=(45,9)$ for $\alpha=40$, and $\rho=10^{-4}$, $\beta=2\times10^{3}$, $\delta=4\times10^{-7}$, and $\nu=(25,10)$ for $\alpha=10$. In the tomography setting, we fix $\rho=10^{-2}$, $\beta=5\times10^{2}$, $\delta=2\times10^{-3}$, and $\nu=(20,10)$.  For the BSD-LwSGS, the hyperparameters were selected similarly.  In denoising, when $\alpha=40$, we use $\rho=1.6\times10^{-4}$, $\beta=10^{3}$, $\delta=2\times10^{-7}$, and $\nu=(0.01,0.01)$, while for $\alpha=10$, we set $\rho=10^{-4}$, $\beta=2\times10^{3}$, $\delta=2\times10^{-7}$, and $\nu=(0.05,0.05)$.  For deblurring, we choose $\rho=10^{-4}$, $\beta=10^{3}$, $\delta=2\times10^{-7}$, and $\nu=(0.02,0.01)$ for $\alpha=40$, and $\rho=10^{-4}$, $\beta=10^{3}$, $\delta=2\times10^{-7}$, and $\nu=(0.09,0.045)$ for $\alpha=10$. Finally, in tomography, the parameters are fixed to $\rho=10^{-2}$, $\beta=3\times10^{1}$, $\delta=2\times10^{-4}$, and $\nu=(0.05,0.05)$.\\

\noindent{\emph{RPnP-ULA algorithm} -- } For RPnP-ULA, we fixed $\varepsilon = (5/255)^2$. With RED-RPnP-ULA, for denoising, the configuration was $\beta = 1.0$ with a constant denoising level $\nu = (25,25)$. In deblurring with $\alpha=10$, the parameters were $\beta = 0.9$ and $\nu = (25,10)$. For deblurring with $\alpha=40$, we used $\beta = 10^{-2}$ and denoising levels $\nu =(13,\,13)$. Finally, for PET, the configuration was $\beta = 1$ and $\nu = (8,\,8)$. For BSD-RPnP-ULA, we set $\beta = 2.0$ and $\nu = (0.1,\,0.05)$ for deblurring with $\alpha=10$. For deblurring with $\alpha=40$, we used $\beta = 2.0$ and $\nu = (0.1,\,0.05)$.\\

\noindent{\emph{TV-PIDAL algorithm} -- } The algorithm was run with a maximum of $1000$ iterations. The regularization and coupling parameters varied depending on the task: for denoising, we used $\beta = 1.0$ and $\rho = 2.0$; for deblurring, $\beta = 1.0$ and $\rho = 1.7$; and for positron emission tomography (PET), $\beta = 8$ and $\rho = 3$.\\

\noindent{\emph{TV-SPA algorithm} -- } The parameters for denoising were set to $\rho_1 = \rho_2 = 1.5$ and $\beta = 1.5$. In deblurring, we used $\rho_1 = \rho_2 = 1.5$, $\beta = 2.0$ for high noise, and $\rho_1 = \rho_2 = 1.5$, $\beta = 1.5$ for moderate noise. In the PET setting, the parameters were $\rho_1 = \rho_2 = 10^{-4}$ and $\beta = 4\times10^{5}$.\\


\bibliographystyle{ieeetran}
\bibliography{strings_all_ref,biblio}

\end{document}